\title{Quantum Spacetimes  from  General Relativity?}
\author{Albert Much\footnote{much@itp.uni-leipzig.de}}  
\affil{Institut f\"ur Theoretische Physik\\ Universit\"at Leipzig\\ D-04103 Leipzig}  
\allowdisplaybreaks\usepackage{multirow} 
\newcommand{\mathe}{\mathrm{e}}
\newcommand{\mathi}{\mathrm{i}}
\let\oldre\Re
\let\oldim\Im
\renewcommand{\Re}{\oldre\mathfrak{e}\,}
\renewcommand{\Im}{\oldim\mathfrak{m}\,}
\newcommand{\total}{\mathop{}\!\mathrm{d}}
\newcommand{\eqend}[1]{\,#1}
\newcommand{\ma}{\mathcal{M}}
\newcommand{\gm}{\mathcal{G}_{\mathcal{M}}}
\newtheorem{theorem}{\textsc{Theorem}}[section]
\newtheorem{proposition}[theorem]{\textsc{Proposition}}
\newtheorem{result}[theorem]{\textsc{Result}}
\newtheorem{corollary}[theorem]{\textsc{}Corollary}
\newtheorem{definition}[theorem]{\textsc{Definition}}
\newtheorem{remark}[theorem]{Remark}
\newcommand{\R}{\mathbb{R}}
\newcommand{\C}{\mathbb{C}}
\numberwithin{equation}{section} 
\begin{document}
	\maketitle
	\abstract{We introduce a non-commutative product for curved spacetimes, that can be regarded as  a generalization of the Rieffel (or Moyal-Weyl) product. This product employs the exponential map and a Poisson tensor, and   the deformed product maintains associativity   under the condition that the Poisson tensor $\Theta$ satisfies $\Theta^{\mu\nu}\nabla_{\nu}\Theta^{\rho\sigma}=0$, in relation to a Levi-Cevita connection. We proceed to solve the associativity condition for various physical spacetimes, uncovering non-commutative structures with compelling properties.
	}
	\tableofcontents   
	\newpage
	\section{Introduction} 
	Why should we quantize spacetime, i.e.\ what is the necessity to consider a  spacetime with quantum features? While, several arguments exist, let us highlight two fundamental issues, for which a solution lies in the concept of a quantum spacetime.
	
	First,  there is a \textbf{physical limitation} of localization of spacetime points, which stem from the fusion of the quantum-mechanical uncertainty principle with the formation of black holes in general relativity.  In particular, 
	localization with extreme precision induces gravitational collapse, rendering spacetime below the Planck scale devoid of operational significance, see Refs.\cite{Ahluwalia1993,DFR}.
	
	Second, we mention a \textbf{continuity} or transitive argument. If the Einstein equations provide a connection between gravity and spacetime curvature,  any theory of quantum gravity must entail the quantization of spacetime or the spacetime curvature. \par
	A direct   method to conceptualize quantized spacetime or  non-commutative geometry  involves replacing the algebra of smooth functions, denoted as $C^{\infty}(\mathcal{M})$, on a manifold $\mathcal{M}$ with a non-commutative product, see for example the review \cite{SZNCQFT}. This approach parallels the procedure employed in quantum mechanics, where the phase-space algebra is  replaced with a non-commutative product, see \cite{Bayen1, Bayen3, DEFBay}.
	\par
	In addition to addressing the technical challenges inherent in this approach, it would be  desirable to identify equations corresponding to specific spacetimes that yield solutions representing non-commutative structures. This would provide a concrete and systematic framework for understanding the emergence of non-commutative structures in the context of different spacetime geometries analogous to  the metric as a dynamical object in  general  relativity or to the approach of non-commutative geometry via matrix models known as emergent gravity, see \cite{eg1,eg2,eg3}.
	\par
	
	In a recent work, we presented a  deformation quantization framework, following the Rieffel approach, applicable to  globally hyperbolic spacetimes possessing a specific  Poisson structure, \cite{muchboumaye}. Crucially, these Poisson structures must adhere to Fedosov type requirements to ensure   associativity of the resulting deformed product. By applying this  deformation scheme to quantum field theories and their associated states, we established that the deformed state in a non-commutative spacetime exhibits a singularity structure reminiscent of Minkowski spacetime, known as being Hadamard. In particular, we demonstrate that if the undeformed state satisfies the Hadamard condition, then the deformed state also possesses this property.
	
	Within this manuscript, we introduce a definition of  a Rieffel product for curved spacetimes, employing the exponential map.    We  demonstrate that this product satisfies the essential properties of a star product: it is unital, possesses a commutative and flat limit, and exhibits associativity under specific conditions on the Poisson tensor:
	\begin{align}\label{eq:ass}
		\Theta^{\mu\nu}\nabla_{\nu}\Theta^{\rho\sigma}=0.
	\end{align} 
	For a non-degenerate  Poisson tensor this equation reduces to 
	\begin{align}\label{eq:deg}
		\nabla_{\nu}\Theta^{\rho\sigma}=0,
	\end{align}
	which is the Fedosov requirement in disguise. While for a non-degenerate Poisson tensor there are several examples (provided in Section \ref{sec:nondeg}), for various significant spacetimes e.g.\ Schwarzschild or Friedman-Robertson-Walker-Lemaitre,   there are no solutions to the covariant constancy of the Poisson tensor given in Equation \ref{eq:deg}, with exception to the trivial solution $\Theta=0$.  
	\par
	The degenerate case on the other hand offers a bigger variety of solutions and we discuss in this paper  the physical implications of  various non-commutative geometries that we obtain as solutions to this equation.
	\par
	Through our analysis, we establish the viability of this Rieffel product as a fundamental tool in the exploration of non-commutative geometries arising in the context of curved spacetimes.
	\par The paper is structured as follows. In Section \ref{sec:mp} we define the generalized Rieffel product and 
	prove that it defines an associative star product, if the Poisson tensor fulfills Equation \eqref{eq:ass}. Section \ref{sec:assoc} delves into useful mathematical properties of the Poisson tensor. Sections \ref{sec:nondeg} and \ref{sec:degenerate} study solutions to the Poisson tensor for specific spacetimes (i.e.\ solutions to the Einstein equations).  
	\section{Mathematical Preliminaries}\label{sec:mp}
	We begin this section with  a result regarding the existence of     a unique maximal geodesic, \cite[Corollary 4.28]{LeeRiem}.
	\begin{corollary}\label{corexponent}
		Let $\ma$ be a smooth manifold and let $\nabla$ be a connection in its respective tangent bundle  $T\ma$\footnote{A tangent bundle $T\ma$ is the collection of all of the tangent spaces $T_p\ma$ for all points  $p$ on a manifold $\ma$.}. For each $p \in \ma$ and $v \in T_p\ma$, there is a unique maximal geodesic $\gamma : I \rightarrow \ma$ with $\gamma(0) = p$ and $\dot{\gamma}(0) = v$, defined on some open interval $I$ containing $0$.
	\end{corollary}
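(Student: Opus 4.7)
The plan is to reduce the statement to the standard existence-and-uniqueness theorem for ordinary differential equations on $\R^n$, since geodesy is a local second-order condition.

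First I would fix a smooth chart $(U,x^i)$ around $p$ and write the geodesic equation $\nabla_{\dot\gamma}\dot\gamma = 0$ in coordinates as the second-order system
\begin{equation*}
\ddot\gamma^{\,k}(t) + \Gamma^{k}_{ij}\!\bigl(\gamma(t)\bigr)\,\dot\gamma^{\,i}(t)\,\dot\gamma^{\,j}(t) = 0,
\end{equation*}
where the $\Gamma^{k}_{ij}$ are the connection coefficients of $\nabla$ in this chart. I would then rewrite this as a first-order system on $TU\subset T\ma$ by introducing the auxiliary variable $v^{k}=\dot\gamma^{\,k}$, so the unknowns are $(\gamma^{\,k},v^{k})$ with initial data determined by $p$ and $v$. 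Because the $\Gamma^{k}_{ij}$ are smooth, the right-hand side of the resulting system is smooth, so the Picard--Lindel\"of theorem produces a unique smooth solution on some open interval $I_{0}\ni 0$ contained in a neighborhood of $(p,v)$.

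Next I would promote this local solution to a maximal one. Let $\mathcal{F}$ be the family of all geodesics $\gamma_{\alpha}:I_{\alpha}\to\ma$ with $\gamma_{\alpha}(0)=p$ and $\dot\gamma_{\alpha}(0)=v$. For any two such solutions, I would show that they coincide on $I_{\alpha}\cap I_{\beta}$: the set on which they agree (together with their velocities) is nonempty (it contains $0$), closed by continuity, and open by the local ODE uniqueness applied in any chart covering a point of agreement; since $I_{\alpha}\cap I_{\beta}$ is connected, the two maps must agree throughout. Therefore the union $I:=\bigcup_{\alpha}I_{\alpha}$ is an open interval containing $0$ and the pointwise definition $\gamma(t):=\gamma_{\alpha}(t)$ for $t\in I_{\alpha}$ yields a well-defined smooth geodesic on $I$ satisfying the initial conditions, which is by construction maximal.

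The main obstacle is the gluing step: one must check that the coordinate-dependent ODE argument produces a well-defined object on $\ma$, i.e.\ that two solutions of the geodesic equation expressed in different charts agree on their common domain. This reduces to the fact that the geodesic equation is coordinate-invariant (it is $\nabla_{\dot\gamma}\dot\gamma = 0$) together with the connectedness argument above; once this is in hand, maximality and uniqueness follow from set-theoretic considerations without further analytic input.
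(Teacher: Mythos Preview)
Your argument is correct and is the standard textbook proof: reduce the geodesic equation to a first-order ODE on $T\ma$ in a chart, invoke Picard--Lindel\"of for local existence and uniqueness, and then glue via the open-closed argument on the overlap of domains to obtain a unique maximal extension. The handling of chart changes via the intrinsic equation $\nabla_{\dot\gamma}\dot\gamma=0$ is the right way to close the gluing step.

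There is nothing to compare, however: the paper does not supply its own proof of this corollary. It is simply quoted from Lee's \emph{Riemannian Manifolds} (cited there as Corollary~4.28) as a preliminary fact needed to define the exponential map. Your write-up is essentially the argument Lee gives, so nothing is missing or in tension with the paper.
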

	This result allows us to define the exponential map, \cite[Chapter 5]{LeeRiem}. The exponential map,  'propagates' the tangent vector \( v \) along a geodesic starting from \( p \) in the direction specified by \( v \), and \( \exp_p(v) \) gives the point reached after moving along this geodesic for a unit parameter length. Hence, the exponential map is the natural   generalization of a translation in flat manifolds to the case of curved manifolds. This is our starting point for the generalization of the Rieffel product from flat to curved.
	
	\begin{definition}
		Define a subset $\mathcal{E} \subseteq T\ma$, \textbf{the domain of the exponential map}, by
		\[ \mathcal{E} = \{ v \in T\ma \, | \, \gamma_v  \text{ is defined on an interval containing } [0,1] \}, \]
		and then define the exponential map $\text{exp} :\mathcal{E} \rightarrow \ma$ by
		\[ \text{exp}(v) = \gamma_v(1), \]
		where $\gamma_v$ is the unique geodesic with initial conditions $\gamma_v(0) = p$ and $\dot{\gamma}_v(0) = v$.
		For each $p \in \ma$, the restricted exponential map at $p$, denoted by $\text{exp}_p$, is the
		restriction of $\text{exp}$ to the set $\mathcal{E}_p = \mathcal{E} \cap T_p\ma$.
	\end{definition} 
	\begin{definition}
		We  denote by $\mathcal{G}_{\ma}$ the set of all transformations that are generated by the exponential map of a manifold $\ma$. This set forms a group, see  \cite{mueck}.
	\end{definition}
	\par Equipped with the former definitions we define the generalized Rieffel product in analogy with \cite{muchboumaye}.  
	
	\begin{definition}
		\label{def:defpro}
		Let the smooth action $\alpha$ of the group $\gm$ denote the geodesic map w.r.t\ the manifold $(\mathcal{M},g)$,   and let $\Theta \in \Gamma^{\infty}(\Lambda^2(T_x\ma))$  be a Poisson bivector.
		Then, the  \textit{formal}  generalized Rieffel product of two functions $f,g\in  C^{\infty}_0(\mathcal{M})$ is defined as
		\begin{align*}
			\label{eq:defpro}
			\left( f \star_\Theta g \right)(x) &\equiv  \lim_{\epsilon \to 0} \iint {\chi}(\epsilon X, \epsilon Y) \, \alpha_{\Theta   X}(f(x))  \, \alpha_{Y}(g(x))   \,\mathe^{- {\mathi} \, (  X,    Y )_x   } \total^N X \total^N Y \\
			&=  \lim_{\epsilon \to 0} \iint \chi(\epsilon X, \epsilon Y) \, f(\exp_{(x)}( \Theta X))  \, g(\exp_{(x)}( Y)) \,  \,\mathe^{- {\mathi} \,X \cdot\, Y  }  \total^N X \total^N Y\eqend{,}
		\end{align*}
		where  $X,Y\in T_x\ma$ and the integrations are w.r.t.\ the non-vanishing components (maximally $dim\ma$)  and  the scalar product $ \cdot  $ is w.r.t.\ the  metric $g_x$   at the point $x$. Moreover,  the cut-off function $\chi \in   C_0^{\infty}(\ma  \times  \ma)$ is chosen such that   condition $\chi(0,0) = 1$ is fulfilled. 
	\end{definition} 
	The main difference between the generalized Rieffel product  and the one introduced in \cite{muchboumaye} is the absence of the embedding formalism.

	This product satisfies the standard properties of a star product, see \cite[Defintion 6.1.1]{waldpoiss} and \cite{Kont}. 
	\begin{proposition}\label{propstar}
		The generalized Rieffel product  fulfills\newline
		\begin{itemize}
			\item Unital, \ $$1\star_\Theta f=f\star_\Theta 1=f$$ \newline
			\item 
			The commutative limit,  $$\displaystyle\lim_{\Theta \rightarrow 0} (f \star_\Theta g) (x)= (f \cdot g) (x),$$  	\newline
			\item The flat limit, i.e.\ in case that the manifold is the four-dimensional   Minkowski  spacetime and a constant  Poisson bivector of maximal rank, the generalized Rieffel product turns to the standard Rieffel product.
		\end{itemize}   
		\end{proposition}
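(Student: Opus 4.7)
The plan is to verify the three properties in turn, all of which reduce to one oscillatory-integral identity. The basic fact I would extract first is that, in the sense of tempered distributions on $T_x\ma\cong\R^N$,
\begin{align*}
\lim_{\epsilon\to 0}\int \chi(\epsilon X,\epsilon Y)\,\mathe^{-\mathi X\cdot Y}\,\total^N X=(2\pi)^{N}\,\delta^{(N)}(Y),
\end{align*}
using only $\chi(0,0)=1$ and the compact support of $\chi$. This identity is the technical workhorse.

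For unitality, I would insert $f\equiv 1$ into Definition \ref{def:defpro}; the $X$-integration decouples from $g(\exp_x(Y))$, the identity above collapses the $Y$-integration to $Y=0$, and Corollary \ref{corexponent} together with $\exp_x(0)=x$ yields $g(x)$. The symmetric identity $f\star_\Theta 1=f$ follows analogously: setting $g\equiv 1$, the $Y$-integration now localizes at $Y=0$ so only $f(\exp_x(\Theta X))$ survives, and then the $X$-integration, restricted to the non-vanishing components of $\Theta$ as stipulated in Definition \ref{def:defpro}, localizes this to $f(\exp_x(0))=f(x)$ after an elementary rescaling absorbed into the normalization of $\chi$.

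For the commutative limit, fix $x$ and send $\Theta\to 0$ pointwise inside the integrand. Since $\chi(\epsilon\,\cdot,\epsilon\,\cdot)$ has compact support in $T_x\ma\times T_x\ma$ and $f\circ\exp_x$ is smooth near the origin, $f(\exp_x(\Theta X))\to f(x)$ uniformly on this support as $\Theta\to 0$; pulling $f(x)$ out and invoking the previous step gives $f(x)\cdot g(x)$. For the flat limit, on Minkowski spacetime the Levi-Civita exponential is $\exp_x(v)=x+v$, so Definition \ref{def:defpro} becomes
\begin{align*}
(f\star_\Theta g)(x)=\lim_{\epsilon\to 0}\iint \chi(\epsilon X,\epsilon Y)\,f(x+\Theta X)\,g(x+Y)\,\mathe^{-\mathi X\cdot Y}\,\total^N X\,\total^N Y,
\end{align*}
which is precisely the standard oscillatory-integral presentation of the Rieffel/Moyal--Weyl product with constant antisymmetric $\Theta$ of maximal rank.

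The main obstacle is analytic rather than structural: one must justify the interchange of the $\epsilon\to 0$ limit with the pointwise $\Theta\to 0$ limit and with the $X,Y$-integrations, and verify that the cut-off removal produces the distributional identity above when paired against the smooth, not \emph{a priori} compactly supported, factor $g\circ\exp_x$. This is the standard oscillatory-integral story à la Hörmander, combined with the observation that $\exp_x$ is a local diffeomorphism near $0\in T_x\ma$, so that $f,g\in C_0^\infty(\ma)$ lift to functions enjoying the regularity and support properties in a neighborhood of $x$ needed for the estimates to go through.
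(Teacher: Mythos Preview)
Your argument is correct and follows essentially the same route as the paper: both proofs reduce unitality and the commutative limit to the oscillatory-integral delta identity together with $\exp_x(0)=x$, and both obtain the flat limit by specializing $\exp_x(v)=x+v$ on Minkowski space. One minor slip: in the $f\star_\Theta 1$ case you write that ``the $Y$-integration now localizes at $Y=0$,'' but what actually happens is that the $Y$-integration produces a delta at $X=0$, which then forces $\Theta X=0$ in the remaining $X$-integral---the conclusion $f(\exp_x(0))=f(x)$ is right, only the phrasing should be corrected.
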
 
		\begin{proof}
			To  prove unitality we  use the identity element of the exponential map, i.e.\ $\exp_p(0)=p$, see Corollary \ref{corexponent}.
			
			\begin{align*}
				\left( f \star_\Theta 1 \right)(x) &\equiv \lim_{\epsilon \to 0} \iint {\chi}(\epsilon X, \epsilon Y) \, \alpha_{\Theta   X}(f(x))  \, \alpha_{Y}(1)   \,\mathe^{- {\mathi} \, (  X,    Y )_x  } \total^N X \total^N Y \\
				&=  \lim_{\epsilon \to 0} \iint \chi(\epsilon X, \epsilon Y) \, f(\exp_{(x)}( \Theta X))  \,   \,  \,\mathe^{- {\mathi} \,X \cdot\, Y  }  \total^N X \total^N Y \\
				&=   f(\exp_{(x)}( 0))   \\&=  f(x),
			\end{align*}
			The proof for $\left(1\star_\Theta f \right)(x)$ is analogous. \par Assuming we can interchange the limit and the integrals, which can be done under certain assumptions on the functions $f,g$ see \cite{muchboumaye}, we obtain  
			\begin{align*}
				\lim_{\Theta \rightarrow 0}	\left( f \star_\Theta g \right)(x)  &=
				\lim_{\Theta \rightarrow 0} \iint  \, \alpha_{\Theta X}(f(x)) \, \alpha_{Y}(g(x)) \, \,\mathe^{- {\mathi}  X  \cdot\,Y   }  \total^N X \total^N Y
				\\&=f(x)  \,   \iint  g(\exp_{(x)}( Y))\, \,\mathe^{- {\mathi}  X  \cdot\,Y   }  \total^N X \total^N Y \\&=f(x)\cdot  g(x)
			\end{align*}
			where we used the continuity of $f$ and the property $\exp_{(x)}(0)=x$ of the exponential map. \par
			The exponential map for the Minkowski spacetime is simply a translation, i.e.\ 
			\begin{align*}
				\exp_{(x)}(v)=x+v,
			\end{align*}
			hence we have for the generalized product  the Rieffel product,
			\begin{align*}
				\left( f \star_\Theta g \right)(x) &\equiv \lim_{\epsilon \to 0} \iint {\chi}(\epsilon X, \epsilon Y) \, \alpha_{\Theta   X}(f(x))  \, \alpha_{Y}(g(x))   \,\mathe^{- {\mathi} \, (  X,    Y )_x   } \total^4 X \total^4 Y \\
				&=  \lim_{\epsilon \to 0} \iint \chi(\epsilon X, \epsilon Y) \, f(\exp_{(x)}( \Theta X))  \, g(\exp_{(x)}( Y)) \,  \,\mathe^{- {\mathi} \,X \cdot\, Y  }  \total^4 X \total^4 Y
				\\
				&=  \lim_{\epsilon \to 0} \iint \chi(\epsilon X, \epsilon Y) \, f ( x+\Theta X)  \, g ( x+Y) \,  \,\mathe^{- {\mathi} \,X \cdot\, Y  }  \total^4 X \total^4 Y\eqend{.} 
			\end{align*} 
			
		\end{proof} 
				\label{2.1}
				
			Next, we turn to the question of associativity.

			\begin{theorem}\label{thm:defprodorder}
				Let the Poisson bivector  $\Theta\in \Gamma^{\infty}(\Lambda^2(T_x\ma))$,  fulfill the following condition w.r.t.\ the Levi-Cevita connection 
				\begin{equation} \label{eqpoisassoc}
					\Theta^{\mu\nu}  \nabla_\mu\Theta^{\beta\alpha}	    =0. 
				\end{equation}
				Then, the generalized  Rieffel product  is associative up to second order in $\Theta$, and  is explicitly given by 
				\begin{align*}
					& \left( f \star_\Theta g \right)(x)   =f (x)   g(x) -i \Theta^{\mu\nu}	 \,\partial_\mu f (x) \,\partial_\nu g (x) - \frac{1}{2}  
					\Theta^{\mu\alpha}\Theta^{\nu\beta} 
					\nabla_\mu \partial_\nu f(x) \, \nabla_\alpha \partial_\beta g(x)  +\mathcal{O}(\Theta^3).
				\end{align*} 
				
			\end{theorem}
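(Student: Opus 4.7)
My plan is to prove the theorem in two stages: first to derive the explicit second-order expansion of the product, and then to check associativity at that order using hypothesis~\eqref{eqpoisassoc}.

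\emph{Stage 1 (expansion).} I would begin by Taylor-expanding the integrand along geodesics. Writing $\gamma(t)=\exp_x(tv)$ for $v\in T_x\ma$ and inserting the geodesic equation $\ddot{\gamma}^\mu=-\Gamma^\mu_{\alpha\beta}\dot{\gamma}^\alpha\dot{\gamma}^\beta$ into the ordinary $t$-expansion of $f\circ\gamma$ yields the covariant Taylor formula
\begin{align*}
	f(\exp_x(v)) = f(x) + v^\mu \partial_\mu f(x) + \tfrac{1}{2}\, v^\mu v^\nu \nabla_\mu \partial_\nu f(x) + \mathcal{O}(v^3).
\end{align*}
Applying this with $v=\Theta X$ in the first slot and $v=Y$ in the second, substituting into Definition~\ref{def:defpro}, and evaluating the resulting oscillatory integrals termwise by the Fourier-inversion identities already used in the flat Rieffel construction, the three summands of the stated formula fall out. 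The second-order term in particular inherits the covariant Hessians $\nabla_\mu\partial_\nu f$ and $\nabla_\alpha\partial_\beta g$ directly from this expansion, rather than the naive $\partial_\mu\partial_\nu$ one would obtain in flat space.

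\emph{Stage 2 (associativity).} I would then compute $((f\star_\Theta g)\star_\Theta h)(x)$ and $(f\star_\Theta(g\star_\Theta h))(x)$ to order $\mathcal{O}(\Theta^3)$ using the formula from stage~1 together with distributivity. The $\mathcal{O}(\Theta^0)$ contributions obviously agree. At order $\Theta^1$ the associator reduces to
\begin{align*}
	-\mathi\,\Theta^{\mu\nu}\bigl[\partial_\mu(fg)\,\partial_\nu h +(\partial_\mu f\,\partial_\nu g)\,h-\partial_\mu f\,\partial_\nu(gh)-f\,\partial_\mu g\,\partial_\nu h\bigr],
\end{align*}
which vanishes identically by the Leibniz rule. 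At order $\Theta^2$ two families of contributions appear: products of two covariant Hessians of the arguments, which cancel after symmetrising in the paired indices by antisymmetry of $\Theta$; and ``mixed'' terms where a covariant derivative has to be moved past a factor of $\Theta$, producing coefficients proportional to $\Theta^{\mu\nu}\nabla_\mu\Theta^{\alpha\beta}$ contracted with first derivatives of $f$, $g$, $h$. Hypothesis~\eqref{eqpoisassoc} kills precisely these mixed terms, giving associativity at second order.

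The hardest step is the bookkeeping in stage~2. One must reorganise the second-order expansion of both bracketings so that the residual obstruction really does collapse to a contraction with $\Theta^{\mu\nu}\nabla_\mu\Theta^{\alpha\beta}$ as in~\eqref{eqpoisassoc}, and not merely to its weaker cyclic (Jacobi) combination; this requires careful use of the symmetry of $\nabla_{(\mu}\partial_{\nu)}f$ together with the antisymmetry of $\Theta^{\mu\nu}$. A subsidiary analytic point is the interchange of the limit $\epsilon\to 0$ with the double integration for $f,g,h\in C^\infty_0(\ma)$, which I would justify along the lines of~\cite{muchboumaye}.
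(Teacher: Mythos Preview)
Your proposal is correct and follows essentially the same route as the paper's proof: covariant Taylor expansion along geodesics combined with Fourier inversion to obtain the explicit second-order formula, followed by a direct comparison of $((f\star_\Theta g)\star_\Theta h)$ and $(f\star_\Theta(g\star_\Theta h))$ in which the surviving obstruction is rewritten as a contraction with $\Theta^{\mu\nu}\nabla_\mu\Theta^{\alpha\beta}$. The paper's computation additionally invokes the Jacobi identity at the very end to pass from the raw difference of the two bracketings to the clean covariant form, which is exactly the ``hardest step'' you anticipated; your plan to handle it via the symmetry of $\nabla_{(\mu}\partial_{\nu)}f$ together with the antisymmetry of $\Theta$ is the same mechanism in different clothing.
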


			\begin{proof}
				See Appendix \ref{appthmassoc}.
				
			\end{proof}
			The former theorem can be as well obtained by a pull-back from the formula supplied in \cite{muchboumaye}, if one uses the exponential map. \begin{remark}
				The deformed product   obtained by the generalized Rieffel product is, up to second order,    equivalent to the star 
				product  given by the  following Drinfeld twist, \cite{drin} in terms of a  formal power series 
				\begin{align}
					\mu_{0}\bigl(   \exp(-i\Theta^{\mu\nu}\nabla_{\mu}\otimes \nabla_{\nu}) (f \otimes g)\bigr),
				\end{align}
				where $\mu_{0}:C^{\infty}(\mathcal{M})\otimes C^{\infty}(\mathcal{M})\rightarrow C^{\infty}(\mathcal{M})$, assuming   the associativity condition given in Equation \eqref{eqpoisassoc}. It is easy to verify that for the flat spacetime and constant $\Theta$, this star product becomes the well-known representation of the Moyal-Weyl product, see \cite{pasc}. 
				
			\end{remark}
			
			\begin{proposition}\label{prop:nc}
				The non-commutative structure between the coordinates, expressed by the star commutator is given to all orders in the deformation matrix $\Theta$ by
				\begin{align}
					[x^{\mu},x^{\nu}]_{\Theta}&:=
					x^{\mu}\star_{\Theta}x^{\nu}-   x^{\nu}\star_{\Theta}x^{\mu} = -2i\Theta^{\mu\nu}.
				\end{align}
			\end{proposition}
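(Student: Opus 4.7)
The plan is to evaluate the commutator at an arbitrary point $p\in\ma$ by choosing Riemann normal coordinates centered at $p$. The crucial property of such coordinates is that the exponential map becomes exactly linear there, $(\exp_p(v))^\mu=v^\mu$, with $p$ itself at the origin. This single observation is what converts the a priori all-orders series in $\Theta$ into a finite, essentially two-term calculation.

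First, I would substitute the coordinate functions $f=x^\mu$ and $g=x^\nu$ into Definition \ref{def:defpro}. In normal coordinates at $p$, the integrand collapses to $f(\exp_p(\Theta X))=\Theta^{\mu\alpha}(p)\,X_\alpha$ and $g(\exp_p(Y))=Y^\nu$, with no higher powers of $X$ or $Y$, since $\exp_p$ is linear and the Christoffel symbols vanish at $p$. Thus
\begin{align*}
(x^\mu\star_\Theta x^\nu)(p)=\lim_{\epsilon\to 0}\iint \chi(\epsilon X,\epsilon Y)\,\Theta^{\mu\alpha}(p)\,X_\alpha\,Y^\nu\,\mathe^{-\mathi X\cdot Y}\,\total^N X\,\total^N Y.
\end{align*}
Next, I would evaluate this oscillatory integral by the standard Fourier manipulation $X_\alpha\,\mathe^{-\mathi X\cdot Y}=\mathi\,\partial_{Y^\alpha}\mathe^{-\mathi X\cdot Y}$, performing the $X$-integration to produce a delta function in $Y$, and integrating by parts against $Y^\nu$. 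The outcome is the clean expression $(x^\mu\star_\Theta x^\nu)(p)=-\mathi\,\Theta^{\mu\nu}(p)$.

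The conclusion is then immediate: antisymmetrizing in $\mu\leftrightarrow\nu$ and using $\Theta^{\nu\mu}=-\Theta^{\mu\nu}$,
\begin{align*}
[x^\mu,x^\nu]_\Theta(p)=-\mathi\Theta^{\mu\nu}(p)+\mathi\Theta^{\nu\mu}(p)=-2\mathi\,\Theta^{\mu\nu}(p),
\end{align*}
and since $p$ was arbitrary the identity holds as an equality of functions.

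The main obstacle is conceptual rather than computational: one must argue convincingly that no higher-order terms in $\Theta$ can appear. In a generic chart the Taylor expansion of $(\exp_p(v))^\mu$ carries a whole tower of Christoffel-dependent contributions, and one would \emph{a priori} expect contributions at every order in $\Theta$; reassembling them would be a formidable combinatorial task that would presumably require a genuine use of the associativity condition \eqref{eqpoisassoc}. The normal-coordinate trick sidesteps this entirely by reducing the local computation to the flat Moyal--Weyl case at the point $p$. A secondary but routine issue is justifying the interchange of the $\epsilon\to 0$ limit with the integrations, which is handled by the same tempered-distribution considerations already invoked in the proof of Proposition \ref{propstar}.
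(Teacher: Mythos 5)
There is a genuine gap, and it sits exactly where you claim to have sidestepped the difficulty. Proposition \ref{prop:nc} is a statement about the coordinate functions $x^{\mu}$ of the chart in which the components $\Theta^{\mu\nu}$ are given (this is how it is used later, e.g.\ for $[t,x^{i}]_{\Theta}$ in FRWL coordinates). When you pass to Riemann normal coordinates $\tilde{x}$ centered at $p$, the functions whose star product becomes a linear integrand are the \emph{normal} coordinate functions $\tilde{x}^{\mu}$, not the original $x^{\mu}$: in the chart of the statement one has $x^{\mu}(\exp_{p}(v))=x^{\mu}(p)+v^{\mu}-\tfrac{1}{2}\Gamma^{\mu}_{\kappa\lambda}(p)v^{\kappa}v^{\lambda}+\dots$, i.e.\ the covariant Hessian $\nabla_{\kappa}\partial_{\lambda}x^{\mu}=-\Gamma^{\mu}_{\kappa\lambda}$ and its higher analogues do enter, and they generate contributions at every order in $\Theta$. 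What your computation establishes is that, for each $p$, the star commutator of the normal coordinates centered at $p$, evaluated at $p$, equals $-2\mathi$ times the components of $\Theta$ in those normal coordinates. The closing step ``since $p$ was arbitrary the identity holds as an equality of functions'' conflates this family of statements (one per chart) with the single statement about one fixed chart; knowing the commutator of the generators $\tilde{x}^{\mu}$ does not determine, at higher order in $\Theta$, the commutator of nonlinear functions of them such as the original $x^{\mu}$, so the result for the chart of the statement does not follow.

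The paper's own proof is the direct one: insert $f=x^{\mu}$, $g=x^{\nu}$ into the explicit expansion of Theorem \ref{thm:defprodorder} and check that the terms beyond first order drop out (at second order the Hessian contributions are symmetric under $\mu\leftrightarrow\nu$ together with the relabeling of the contracted indices, so they cancel in the commutator even when they do not vanish individually). If you want to keep a normal-coordinate flavour, the honest version of your argument must expand the original coordinate functions along geodesics in that chart and then show order by order that the Christoffel-dependent terms are symmetric in $\mu\leftrightarrow\nu$ and cancel upon antisymmetrization --- which is precisely the ``formidable combinatorial'' content you declared avoidable, and is where the actual work of the proposition lies. The oscillatory-integral evaluation and the limit-interchange remark in your proposal are fine; the flaw is solely the change of which functions are being star-multiplied.
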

			\begin{proof}
				Using the explicit star product we have
				\begin{align*}
					x^{\mu}\star_{\Theta}x^{\nu}  & = x^{\mu}\, x^{\nu}-i\Theta^{\mu\nu}- \frac{1}{2}  
					\Theta^{\kappa\alpha}\Theta^{\lambda\beta} 
					\nabla_\kappa \partial_\lambda x^{\mu} \, \nabla_\alpha \partial_\beta  x^{\nu}  +\mathcal{O}(\Theta^3)\\
					& = x^{\mu}\, x^{\nu}-i\Theta^{\mu\nu}- \frac{1}{2}  
					\Theta^{\kappa\alpha}\Theta^{\lambda\beta}  \, 
					\underbrace{\nabla_\kappa \delta_\lambda^{\mu} }_{=0}\,    \underbrace{\nabla_\alpha \delta_\beta^{\nu}}_{=0}  
				\end{align*}
				and due to the vanishing of the second order term it can be easily seen that all the higher order terms vanish as well.
			\end{proof}
			
			\section{The Associativity Condition}\label{sec:assoc}
			In this section we investigate the mathematical aspects of the  condition of associativity of the generalized star product, i.e.\ 
			\begin{align}\label{eq:equiv1}
				\Theta^{\mu\nu}  \nabla_\mu\Theta^{\beta\alpha}	    =0.
			\end{align} 
			First, we state a result that is essential in regards to this work, \cite[Proposition 1.5]{poison}
			\begin{proposition}
				Let $(\ma, \nabla)$ be a manifold endowed with a torsion-less linear connection. Then, the bivector $\Theta$ defines a Poisson structure on $\ma$ if and only if
				\begin{align}\label{prop15}
					{\Theta}^{\alpha \beta }  \nabla_\alpha {\Theta}^{\mu \nu}+  {\Theta}^{\alpha \mu}  \nabla_\alpha  {\Theta}^{\nu\beta }+  {\Theta}^{\alpha \nu}  \nabla_\alpha  {\Theta}^{\beta \mu}=0.
				\end{align}
			\end{proposition}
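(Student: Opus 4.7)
The plan is to reduce this covariant identity to the classical coordinate form of the Jacobi identity for the bracket $\{f,g\}(x) = \Theta^{\mu\nu}(x)\,\partial_\mu f\,\partial_\nu g$, which is precisely what it means for $\Theta$ to define a Poisson structure. Recall that this Jacobi identity is equivalent, in local coordinates, to the vanishing of the Schouten--Nijenhuis bracket $[\Theta,\Theta]_{SN}$, namely
\begin{equation*}
\Theta^{\alpha\beta}\partial_\alpha\Theta^{\mu\nu} + \Theta^{\alpha\mu}\partial_\alpha\Theta^{\nu\beta} + \Theta^{\alpha\nu}\partial_\alpha\Theta^{\beta\mu} = 0.
\end{equation*}
So it suffices to show that the covariant cyclic sum displayed in the statement coincides, termwise, with the partial-derivative cyclic sum above.

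The first step is to expand each covariant derivative as
\begin{equation*}
\nabla_\alpha\Theta^{\mu\nu} = \partial_\alpha\Theta^{\mu\nu} + \Gamma^\mu_{\alpha\rho}\Theta^{\rho\nu} + \Gamma^\nu_{\alpha\rho}\Theta^{\mu\rho},
\end{equation*}
substitute into all three terms of the covariant cyclic sum, and split the result into the partial-derivative cyclic sum plus a collection of six ``Christoffel terms''. The task then reduces to proving that these six contributions cancel in pairs.

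The second step is the cancellation. I would group the Christoffel terms by the upper index carried by $\Gamma$; the $\Gamma^\mu$-type pair, for instance, is
\begin{equation*}
\Gamma^\mu_{\alpha\rho}\bigl(\Theta^{\alpha\beta}\Theta^{\rho\nu} + \Theta^{\alpha\nu}\Theta^{\beta\rho}\bigr).
\end{equation*}
Using the antisymmetry $\Theta^{\beta\rho} = -\Theta^{\rho\beta}$ on the second summand, relabeling the dummy indices $\alpha\leftrightarrow\rho$, and finally applying the torsion-free symmetry $\Gamma^\mu_{\rho\alpha} = \Gamma^\mu_{\alpha\rho}$, the two summands become equal and opposite, so the pair vanishes. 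By the cyclic symmetry of the original expression in $(\beta,\mu,\nu)$, the identical argument eliminates the $\Gamma^\nu$- and $\Gamma^\beta$-groups. Thus all Christoffel contributions vanish, the covariant and partial-derivative cyclic sums agree identically as tensor expressions, and the equivalence claimed in the proposition follows.

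I do not foresee any genuine obstacle; the argument is essentially index bookkeeping. The only point requiring care is tracking the sign introduced by the antisymmetry of $\Theta$ and placing the torsion-free hypothesis at precisely the step where it is invoked, so that the role of each assumption is transparent. A conceptual rephrasing is possible in terms of the Schouten--Nijenhuis bracket, whose expression is known to be independent of the choice of torsion-free connection, but the direct index computation above is the most elementary route.
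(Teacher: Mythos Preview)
Your argument is correct and is exactly the standard computation. Note, however, that in the paper this proposition is not proved but merely cited from the literature (\cite[Proposition 1.5]{poison}); there is no proof environment attached to it. That said, the identical Christoffel-cancellation you outline does appear verbatim in the paper, namely in the proof of the \emph{next} proposition, where the author expands $\nabla_\alpha\Theta^{\mu\nu}$, groups the connection terms, and cancels them pairwise via the antisymmetry of $\Theta$ and the symmetry $\Gamma^{\mu}_{\alpha\lambda}=\Gamma^{\mu}_{\lambda\alpha}$. So your approach coincides with the only computation the paper actually carries out in this context.
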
 
			Next, we give a result that follows from the former proposition.
			
			\begin{proposition}
				A bivector  $\Theta  \in \Gamma^{\infty}(\Lambda^2(T\ma))$ fulfilling the associativity condition \eqref{eq:equiv1} , w.r.t.\ the Levi-Cevita connection, fulfills  Equation \eqref{prop15} and   the Jacobi identity 
				\cite[Equation  1.5]{poison}  
				\begin{align*}
					{\Theta}^{\alpha \beta }  \partial_\alpha {\Theta}^{\mu \nu}+  {\Theta}^{\alpha \mu}  \partial_\alpha  {\Theta}^{\nu\beta }+  {\Theta}^{\alpha \nu}  \partial_\alpha  {\Theta}^{\beta \mu}=0,
				\end{align*} and therefore defines a Poisson structure (or Poisson tensor, see \cite[Definition 4.1.7.]{waldpoiss}) on $\ma$.
			\end{proposition}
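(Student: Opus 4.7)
The plan is to observe that the associativity hypothesis is strictly stronger than the condition \eqref{prop15}: it forces each of the three summands in \eqref{prop15} to vanish individually, not merely their cyclic sum. So I would first relabel indices in $\Theta^{\mu\nu}\nabla_\mu\Theta^{\beta\alpha}=0$ to conclude that
\begin{align*}
\Theta^{\alpha\beta}\nabla_\alpha\Theta^{\mu\nu}=0,\qquad \Theta^{\alpha\mu}\nabla_\alpha\Theta^{\nu\beta}=0,\qquad \Theta^{\alpha\nu}\nabla_\alpha\Theta^{\beta\mu}=0,
\end{align*}
so that their sum, which is \eqref{prop15}, holds trivially. Invoking the preceding proposition (Proposition 1.5 of \cite{poison}) then yields that $\Theta$ is a Poisson structure on $\ma$.

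Next, I would pass from the covariant to the partial-derivative form of the Jacobi identity. Writing the Levi-Cevita covariant derivative as
\begin{align*}
\nabla_\alpha\Theta^{\mu\nu}=\partial_\alpha\Theta^{\mu\nu}+\Gamma^{\mu}_{\alpha\lambda}\Theta^{\lambda\nu}+\Gamma^{\nu}_{\alpha\lambda}\Theta^{\mu\lambda},
\end{align*}
I substitute into \eqref{prop15} and split the result into a partial-derivative piece (which is the Jacobi expression one wants) and a Christoffel piece. It then suffices to show that the Christoffel piece vanishes identically, using only the symmetry $\Gamma^{\lambda}_{\alpha\beta}=\Gamma^{\lambda}_{\beta\alpha}$ of the Levi-Cevita connection together with the antisymmetry $\Theta^{\mu\nu}=-\Theta^{\nu\mu}$.

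Concretely, I would collect all Christoffel terms in the cyclic sum by their upper index. For instance, the $\Gamma^{\mu}$-terms read
\begin{align*}
\Gamma^{\mu}_{\alpha\lambda}\bigl(\Theta^{\alpha\beta}\Theta^{\lambda\nu}+\Theta^{\alpha\nu}\Theta^{\beta\lambda}\bigr)
=\Gamma^{\mu}_{\alpha\lambda}\bigl(\Theta^{\alpha\beta}\Theta^{\lambda\nu}-\Theta^{\alpha\nu}\Theta^{\lambda\beta}\bigr),
\end{align*}
and relabelling $\alpha\leftrightarrow\lambda$ in the second summand (allowed by the symmetry of $\Gamma^{\mu}_{\alpha\lambda}$) shows that this expression is equal to its own negative, hence zero. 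The same argument applied to the $\Gamma^{\nu}$- and $\Gamma^{\beta}$-groupings completes the cancellation, leaving exactly the partial-derivative Jacobi identity.

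The only mildly delicate step is the Christoffel cancellation; it is a short index computation but requires carefully pairing the six terms and using both the torsion-freeness of $\nabla$ and the skew-symmetry of $\Theta$ simultaneously. Everything else is either a relabelling or a direct citation.
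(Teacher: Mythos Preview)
Your proposal is correct and follows essentially the same approach as the paper: first observe that each summand of \eqref{prop15} vanishes separately under the associativity hypothesis, then expand the covariant derivatives and cancel the Christoffel contributions pairwise using the symmetry of $\Gamma$ and the antisymmetry of $\Theta$. Your grouping of the six Christoffel terms by their upper index is a slightly more systematic presentation of the same cancellation the paper exhibits for one pair.
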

			\begin{proof}
				We begin the proof by stating that Equation \eqref{prop15} trivially follows from Condition   \eqref{eq:equiv1}.
				Inserting the Christoffel symbols we obtain
				
				\begin{align*}
					&\Theta^{\alpha \beta}\partial_{\alpha}\Theta^{\mu \nu}+\Theta^{\alpha \beta }\Theta^{\lambda \nu}\,\Gamma^{\mu }_{\alpha \lambda}+\Theta^{\alpha \beta }\Theta^{\mu \lambda}\,\Gamma^{\nu}_{\alpha \lambda}
					\\+& \Theta^{\alpha \mu}\partial_{\alpha}\Theta^{\nu \beta}+\Theta^{\alpha \mu }\Theta^{\lambda \beta}\,\Gamma^{\nu }_{\alpha \lambda}+\Theta^{\alpha \mu }\Theta^{\nu \lambda}\,\Gamma^{\beta}_{\alpha \lambda}\\ 
					+&\Theta^{\alpha \nu}\partial_{\alpha}\Theta^{\beta \mu}+\Theta^{\alpha \nu }\Theta^{\lambda \mu}\,\Gamma^{\beta }_{\alpha \lambda}+\Theta^{\alpha \nu }\Theta^{\beta \lambda}\,\Gamma^{\mu}_{\alpha \lambda} 
					\\&
					{\Theta}^{\alpha \beta }  \partial_\alpha {\Theta}^{\mu \nu}+  {\Theta}^{\alpha \mu}  \partial_\alpha  {\Theta}^{\nu\beta }+  {\Theta}^{\alpha \nu}  \partial_\alpha  {\Theta}^{\beta \mu}
					=0
				\end{align*} 
				where we used  the skew-symmetry of the object  $\Theta$ and
				the symmetry of the Christoffel symbols  to cancel various terms, i.e.\ 
				\begin{align*}
					&  \Theta^{\alpha \beta }\Theta^{\lambda \nu}\,\Gamma^{\mu }_{\alpha \lambda}+  \Theta^{\alpha \nu }\Theta^{\beta \lambda}\,\Gamma^{\mu}_{\alpha \lambda} \\=&
					\Theta^{\alpha \beta }\Theta^{\lambda \nu}\,\Gamma^{\mu }_{\alpha \lambda}+  \Theta^{\lambda \nu }\Theta^{\beta \alpha}\,\Gamma^{\mu}_{ \lambda\alpha} \\=&\Theta^{\alpha \beta }\Theta^{\lambda \nu}\,\Gamma^{\mu }_{\alpha \lambda}-  \Theta^{\lambda \nu }\Theta^{\alpha\beta }\,\Gamma^{\mu}_{ \alpha\lambda} \\=&0.
				\end{align*}
				
			\end{proof}$\,$
			The former proposition serves to assure the tensor structure of $\Theta$ in case that the associativity condition is fulfilled.\par
			
			Assuming non-degeneracy of the Poisson tensor the associativity condition reads  
			\begin{align*}
				\nabla_\mu\Theta^{\beta\alpha}	    =0.
			\end{align*}
			Such a connection is  called a Poisson connection, see \cite{DEFBay} and \cite[Chapter 1.4]{poison}. It is further proven in \cite{poison}  that any Poisson manifold with constant rank Poisson bivector possesses Poisson connections. If the covariant derivative is constant, then the tensor has constant rank and is therefore very close to being "symplectic"\footnote{I am indebted to Stefan Waldmann for this comment.}. In case of maximal rank of the Poisson tensor,  the inverse of $\Theta$ is given by the symplectic structure $\omega$.  For the symplectic case, Fedosov, \cite{fedo1} has proven that a deformation quantization exists if the symplectic structure satisfies
			\cite[Definition 2.3]{fedo1}.
			\begin{align*}
				\nabla\omega=0.
			\end{align*}
			Turning to other literature, let us mention  that in \cite[Defintion 1.1.]{boum}  a triple  $(\mathcal{M},\Theta,g)$ with a Poisson tensor being covariant constant is called  a (Pseudo-)Riemannian Poisson manifold, see \cite[Defintion 1.1.]{boum}. It is a generalization of K\"ahler manifolds.  The covariant constancy of the Poisson tensor appears also in the definition of the Poisson-Riemannian manifold, \cite[Equation 3.1]{majbeg}. See in connection of the covariant constancy of the Poisson tensor \cite{hawk1, hawk2}.
			\par
			In the following sections we assume the case of non-degenerate and degenerate  $\Theta$   hence where Equation \eqref{eq:equiv1} holds. We systematically derive the resulting Poisson structures on a case-by-case basis.

			\section{NC spacetimes - The Non-degenerate Cases}\label{sec:nondeg} 
			In the section we prove   the existence of various Poisson structures fulfilling the condition  of associativity of the generalized star product, i.e.\ 
			\begin{align}\label{eq:equiv}
				\Theta^{\mu\nu}  \nabla_\nu\Theta^{\beta\alpha}	    =0,
			\end{align}  
			w.r.t.\ the Levi-Cevita connection, assuming that $\Theta$ is non-degenerate. 
			\subsection{The Flat Case}
			The metric of Minkowski spacetime $(\ma,\eta)$ in Cartesian coordinates $\{t,x,y,z\in\R\}$ is given by
			\begin{align}
				\boxed{
					\eta=-dt^2+dx^2+dy^2+dz^2 .   
				}
			\end{align}
			For this case we evaluate in the following the   Poisson tensor. This example is of importance due to the following reason: It solidifies the fact that the flat limit reduces to the Rieffel product. In particular,  for the flat case, the  non-degenerate Poisson structure is unique. 
			\begin{result}\label{resmink}
				The non-degenerate Poisson structure, fulfilling Equation \eqref{eq:equiv}, w.r.t.\ the Minkowski spacetime $(\ma,\eta)$ is unique and given by a constant  Poisson structure.
			\end{result}
			\begin{proof}
				The proof is straightforward, since for a flat Levi-Cevita connection, Equation \eqref{eq:equiv} turns to 
				\begin{align*}
					\Theta^{\mu\nu}   \partial_\nu\Theta^{\beta\alpha}	    =0 
				\end{align*}
				together with the non-degeneracy we have 
				\begin{align*}
					\partial_\mu\Theta^{\beta\alpha}	    =0 .
				\end{align*}
			\end{proof}
			Note that in the non-degenerate case, the \textbf{only} solution is a constant Poisson tensor, and thus the flat limit from result \ref{propstar} is satisfied by default. We choose a standard representation of $\Theta$, with the only non-vanishing and constant components $\Theta^{01}=\kappa_e$ and $\Theta^{23}=\kappa_m$, see for example \cite{GL1} or \cite{BLS}.
			\par
			\subsubsection{The Flat Case in Spherical Coordinates} 
			In spherical coordinates $\left\{t\in\R,r\in\R^{+},\vartheta\in(0,\pi),\varphi\in[0,2\pi)\right\}$, the Minkowski metric reads
			\begin{align}
				\boxed{
					\eta=- dt^2 + dr^2 + r^2\left(d\vartheta^2+\sin^2\vartheta d\varphi^2\right) .   
				}
			\end{align} 
			with the following Christoffel symbols, \cite[Section 2.1.3.]{muellercat}
			\begin{subequations}
				\begin{alignat}{5}
					\Gamma_{\vartheta\vartheta}^r &= -r, &\qquad \Gamma_{\varphi\varphi}^r &= -r\sin^2\vartheta, &\qquad \Gamma_{r\vartheta}^{\vartheta} &= \frac{1}{r},\\
					\Gamma_{\varphi\varphi}^{\vartheta} &= -\sin\vartheta\cos\vartheta, & \Gamma_{r\varphi}^{\varphi} &= \frac{1}{r}, & \Gamma_{\vartheta\varphi}^{\varphi} &= \cot\vartheta.
				\end{alignat}
			\end{subequations}
			Either one solves the following differential equations assuming a non-degenerate   $ \Tilde{\Theta}$ that is time-independent, 
			\begin{align*}
				\partial_{0} \Tilde{\Theta}^{\alpha\beta}=0
			\end{align*}
			and the remaining differential equations split in spatial and tempo-spatial part, i.e.\
			\begin{align*}
				\partial_{k} \Tilde{\Theta}^{0i}&=- \Tilde{\Theta}^{0j}\Gamma^{i}_{kj}
				\\
				\partial_{k} \Tilde{\Theta}^{ij}&=- \Tilde{\Theta}^{rj}\Gamma^{i}_{kr} + \Tilde{\Theta}^{ri}\Gamma^{j}_{kr} .
			\end{align*}
			or one uses the tensor property of $ \Tilde{\Theta}$ as follows. Denote by $J$ the Jacobian matrix   of partial derivatives, i.e. 
			\begin{align*}
				J^{\rho}_{\,\,\sigma}=\frac{\partial y^{\rho}}{\partial x^{\sigma}}
			\end{align*}
			where $x$ are the Cartesian coordinates and $y$ represent the spherical coordinates. 
			\begin{result}
				The transformed Poisson tensor $\Tilde{\Theta}$ is given in relation to the matrix $\Theta$ obtained in Result \ref{resmink} by 
				\begin{equation}\label{eqtentra}
					\Tilde{\Theta}^{\rho\sigma}= J^{\rho}_{\,\,\mu} \Theta^{\mu\nu}  J^{\lambda}_{\,\,\nu},
				\end{equation}
				which reads explicitly 
				\begin{align}
					\Tilde{\Theta}^{\rho\sigma}= 
					\left( \begin{matrix}
						0 & \kappa_e\cos(\varphi) \sin(\vartheta) &\kappa_e \frac{\cos(\vartheta) \cos(\varphi)}{r} & -\kappa_e\frac{\csc(\vartheta) \sin(\varphi)}{r} \\
						-\kappa_e\cos(\varphi) \sin(\vartheta) & 0 & -\kappa_m\frac{\sin(\varphi)}{r} & -\kappa_m\frac{\cos(\varphi) \cot(\vartheta)}{r} \\
						-\kappa_e\frac{\cos(\vartheta) \cos(\varphi)}{r} & \kappa_m\frac{\sin(\varphi)}{r} & 0 & \kappa_m\frac{\cos(\varphi)}{r^2} \\
						\kappa_e\frac{\csc(\vartheta) \sin(\varphi)}{r} & \kappa_m\frac{\cos(\varphi) \cot(\vartheta)}{r} & -\kappa_m\frac{\cos(\varphi)}{r^2} & 0
					\end{matrix}\right).
				\end{align}
			\end{result}
			\begin{proof}
				The proof is done via matrix multiplication.
			\end{proof}
			
			This example, though seemingly trivial in its use of tensor properties, is crucial for solidifying the understanding that the representation of the Poisson tensor, and consequently the resulting deformation of the product, responds appropriately to coordinate changes.
			
			\subsubsection{Rindler Wedges}
			A sub-region of the  Minkowski spacetime defined by $|t| < x$,   called the right Rindler wedge,  can be considered as a static globally hyperbolic spacetime, such as Minkowksi, on its own right.
			It describes the spacetime geometry experienced by an accelerating observer in flat spacetime and  is often used in quantum field theory  as a simplified model to study   the Hawking  effect, see Fulling–Davies–Unruh effect \cite{INN6, INN7, INN8}. 
			A convenient coordinate system for the right Rindler wedge is given by, \cite{Crisp}
			\begin{equation}
				t = a^{-1}e^{a\xi}\sinh a\tau ,\;\;\;
				x = a^{-1}e^{a\xi}\cosh a\tau,   
			\end{equation}
			where $a$ is a positive constant. Then, the metric takes the form
			\begin{equation}\boxed{
					g =- e^{2a\xi}(d\tau^2 - d\xi^2) +dy^2 +dz^2.  }
			\end{equation}
			\begin{result}
				Using the tensor property of $\Theta$ we obtain by Equation \eqref{eqtentra} for the components  of the Poisson tensor $\Tilde{\Theta}$ for the right Rindler wedge
				\begin{align}
					\Tilde{\Theta}^{01}=  \kappa_e \,e^{-2a\xi}, \qquad    \Tilde{\Theta}^{23}=  \Theta^{23}=\kappa_m.
				\end{align} 
			\end{result}
			Analog considerations can be done for the left Rindler wedge ($|t| < -x$).  
			\subsection{The   Case $\R^2 \times \mathbb{S}^2$}
			Next, we  search for a solution of Equation \eqref{eq:equiv} for a spacetime that is not entirely flat. We choose the spacetime manifold $(\mathcal{M}=\R^2\times \mathbb{S}^2,g=\eta\oplus g_{\mathbb{S}^2})$, where $g_{\mathbb{S}^2} =d\vartheta^2+\sin^2{(\vartheta)} d\varphi^2$ is the Euclidean metric of a unit two-sphere. Thus, the  full metric   reads 
			\begin{align}\boxed{
					g=-dt^2+dx^2+d\vartheta^2+\sin^2{(\vartheta)} d\varphi^2}
			\end{align}
			with coordinates \{$t,x\in\R$\} and where the angles $\vartheta$ and $\varphi$ are restricted to the ranges  
			$0<\vartheta<\pi$, $0\leq\varphi<2\pi$.
			\begin{result}
				The Poisson tensor w.r.t.\ the manifold $(\mathcal{M}=\R^2\times \mathbb{S}^2,g=\eta\oplus g_{\mathbb{S}^2})$, fulfilling the associativity condition \eqref{eq:equiv}, is given by the components $\Theta^{01}=\kappa_e$,  and $\Theta^{23}=C_1\csc(\vartheta)$, with $C_1\in\R$, with all other components vanishing.
			\end{result}
			
			\begin{proof}
				The proof for the induced metric on $\R^2$ is equivalent to the proof in the  former section. For the sphere we have the following Christoffel symbols
				\begin{align*}
					\Gamma^{2}_{ij}    = \begin{pmatrix}
						0 & 0  \\
						0 & -\sin{(\vartheta)} \cos{(\vartheta)} 
					\end{pmatrix},\qquad \qquad    \Gamma^{3}_{ij} =  \begin{pmatrix}
						0 & \frac{\cos{(\vartheta)} }{\sin{(\vartheta)} }  \\
						\frac{\cos{(\vartheta)} }{\sin{(\vartheta)} }  & 0 
					\end{pmatrix}  
				\end{align*} 
				where $i,j=2,3$. The covariant derivatives read, 
				\begin{align*}
					\nabla_i h &= \partial_i h+\Gamma^{2}_{ik}\Theta^{k3}+\Gamma^{3}_{ik}\Theta^{1k}
					\\&=\partial_i h+\Gamma^{2}_{i1} h+\Gamma^{3}_{i3}h,
				\end{align*}
				where we defined the function $h:= \Theta^{23}$.
				Setting the covariant derivative equal to zero, we have for the derivatives  the following differential equations
				\begin{align*}
					-  \partial_{2} h&= \frac{\cos{(\vartheta)} }{\sin{(\vartheta)} } h
					\\
					-\partial_3 h&=\Gamma^{2}_{32} h+\Gamma^{3}_{33}h=0.
				\end{align*}
				The solution of these equations is given by $h=C_1\csc(\vartheta)$.
			\end{proof}
			If we embed the two-sphere into a higher-dimensional Euclidean space, the derived Poisson tensor takes the following well-known form 
			\begin{align*}
				\Theta^{AB}=\varepsilon^{ABC}X_C, 
			\end{align*}
			where the indices $A,B,C=1,2,3$ and $X$ is the coordinate of the embedding-point on the sphere satisfying 
			\begin{align*}
				X_AX^A=1.
			\end{align*}
			This is known as the fuzzy sphere; see \cite{GRFSPH1, GRFSPH2, GRFSPH3}, \cite[Chapter 7.3]{madore} and references therein.
			\subsection{A simplified Bianchi Universe Type $1$ }
			In this Section we consider the case of a manifold product of a two-dimensional de Sitter spacetime $(d \mathbb{S}^2,g_{d \mathbb{S}^2})$ and a two-dimensional   Euclidean space $(\R^2,\delta )$, which is a special case of the Bianchi Universe Type $1$, where the scale factors in $y$ and $z$ direction are set equal to one. The metric describing this spacetime is thus given by 
			\begin{align*}\boxed{
					g=-dt^2+a(t)^2 dx^2+dy^2+dz^2
				}
			\end{align*}
			where $\{t,x,y,z\in\R\}$ and $a(t)^2$ is the scale factor. 
			\begin{result}
				The Poisson tensor w.r.t.\ the simplified Bianchi Universe Type $1$, given by $(\mathcal{M}=\R^2\times \mathbb{S}^2,g=\eta\oplus g_{\mathbb{S}^2})$, fulfilling the associativity condition \eqref{eq:equiv}, is given by the components $\Theta^{01}=C_1\,a(t)^{-1}$ and $\Theta^{23}=C_2$, where $C_1, C_2\in\R$, with all other components vanishing.
			\end{result}
			
			\begin{proof}
				Due to the simplicity of the metric, the only non-vanishing Christoffel symbols are 
				\begin{align}
					\Gamma^{0}_{11}= a\,\dot{a}, \qquad\qquad \Gamma^{1}_{01}= \Gamma^{1}_{10}= \frac{\dot{a}}{a}.
				\end{align}
				Setting all components of $\Theta$ equal to zero, except for $\Theta^{01}$ and $\Theta^{23}$, the covariant constant condition on the Poisson tensor, i.e.\ $\nabla_{\sigma}\Theta^{\mu\nu}=0$, renders the following differential  equations 
				\begin{align}
					\partial_{i}\Theta^{\mu\nu}&=0,
					\\\nonumber \\
					\partial_{0}\Theta^{01}&=-\Gamma_{01}^1\Theta^{01}  \\\nonumber \\\partial_{0}\Theta^{23}&=0.
				\end{align}
				To which the solutions are $\Theta^{01}=C_1/a(t)$ and $\Theta^{23}=C_2$, with $C_1, C_2\in\R$.
			\end{proof}
			
			\section{NC spacetimes - The Degenerate Cases}\label{sec:degenerate}
			Next, we turn to the case of a degenerate Poisson tensor. The necessity for this requirement stems from the fact that in the presence of a non-vanishing Riemann tensor, the condition of a covariant constant Poisson tensor becomes too restrictive. Specifically, in most four-dimensional spacetimes, this condition results in a vanishing Poisson tensor, which follows from\footnote{I am thankful to M. Fr\"ob for this comment.}
			\begin{align}\label{eqnvcurv}
				[\nabla_{\gamma},\nabla_{\delta}]\Theta^{\alpha\beta}= -R^{\alpha}_{\,\,\, \lambda \gamma\delta}\Theta^{\lambda\beta} -R^{\beta}_{\,\,\, \lambda \gamma\delta}\Theta^{\alpha\lambda} =0.
			\end{align}
			In particular, it leads for most physically relevant spacetimes (e.g.\ Schwarzschild, cosmological spacetimes) in four dimensions to a vanishing Poisson tensor and hence denies the possibility of an associative  deformed product using the geodesic map as given in Definition \ref{def:defpro}. Hence, we consider in this section the possibility of having a degenerate Poisson tensor. This refined approach will remove the previous restrictive limitation of covariant constancy and enable a wider array of solutions within the associativity condition, see Equation \eqref{eqpoisassoc}.
			
			\subsection{Morris-Thorne Wormhole}

			In \cite{morwh} the authors introduced  a traversable wormhole spacetime $(\ma,g_{WH})$, with the goal to be used for 
			rapid interstellar travel, by the following metric,  
			\begin{equation}
				\boxed{
					g_{WH} = -dt^2 + dl^2 + (b_0^2+l^2)\left(d\vartheta^2+\sin^2\!\vartheta\,d\varphi^2\right)}
			\end{equation}
			where $b_0$ is the throat radius and $l$ is the proper radial coordinate; and $\left\{t\in\R,l\in\R,\vartheta\in(0,\pi),\varphi\in [0,2\pi)\right\}$, see Figure \ref{wormholethorne}.\newline\newline
			The Christoffel symbols for this spacetime are given by \cite[Section 2.16]{muellercat}
			\begin{subequations}
				\begin{alignat}{3}
					\Gamma_{12}^{2} &= \frac{l}{b_0^2+l^2}, &\qquad \Gamma_{13}^{3} &= \frac{l}{b_0^2+l^2}, &\qquad \Gamma_{22}^{1} &= -l,\\
					\Gamma_{23}^{3} &= \cot\vartheta, &\qquad \Gamma_{33}^{1} &= -l\sin^2\!\vartheta, & \Gamma_{33}^{2} &= -\sin\vartheta\,\cos\vartheta.
				\end{alignat}
			\end{subequations}
			and the non-vanishing components of the  Riemann   and Ricci tensor are given by
			\begin{align} 
				R_{1212}=-\frac{b_0^2}{b_0^2+l^2},\quad R_{1313}&=-\frac{b_0^2\sin^2\!\vartheta}{b_0^2+l^2},\quad R_{2323}=b_0^2\sin^2\!\vartheta  \\
				R_{11} &= -2\frac{b_0^2}{\left(b_0^2+l^2\right)^2} .
			\end{align}
			\begin{result}
				The associativity condition \eqref{eqpoisassoc} has no solutions for the case of the simple wormhole spacetime $(\ma,g_{WH})$ and  non-degenerate Poisson tensor.  
			\end{result}
			\begin{proof} 
				Using Equation \eqref{eqnvcurv} we have for $\alpha=\gamma$,
				\begin{align*}
					R_{ \lambda  \delta}\Theta^{\lambda\beta} +R^{\beta}_{\,\,\, \lambda \alpha\delta}\Theta^{\alpha\lambda} =0  .
				\end{align*}
				Choosing the indices $\beta=i$ and $\delta=k$ to be spatial
				we have 
				\begin{align*}
					R_{ lk}\Theta^{li} +R^{i}_{\,\,\, ljk}\Theta^{jl} =0   ,
				\end{align*}
				which for $k=2$ gives us 
				\begin{align}\label{eq:wmhrt}
					R^{i}_{\,\,\, lj2}\Theta^{jl} =0   ,
				\end{align}
				and setting $i=1$ or $i=3$ renders 
				\begin{align*}
					R^{1}_{\,\,\, lj2}\Theta^{jl} &= R^{1}_{\,\,\, 212}\Theta^{21}
					=0   ,\\
					R^{1}_{\,\,\, lj3}\Theta^{jl} &= R^{1}_{\,\,\, 313}\Theta^{31}
					=0 
				\end{align*}
				from which we have $\Theta^{21}=\Theta^{31}=0$. Setting in Equation \eqref{eq:wmhrt} the index $k=3$ and choosing $i=2$ renders $\Theta^{23}=0$. Hence, $\Theta^{jk}=0$ eliminates the possibility of a non-degenerate solution. 
			\end{proof}
			We conclude from the former result that we have to search for solutions assuming a degenerate Poison tensor.  
			\begin{figure}
				\centering
				\includegraphics[width=0.5\textwidth]{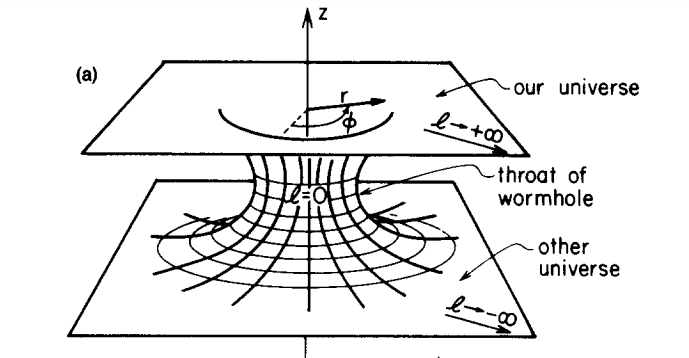}
				\caption{Embedding Diagram for a wormhole that connects two different figures, \cite[Fig. 1.]{morwh}.} 
				\label{wormholethorne}
			\end{figure}
			\begin{result}\label{propwh}
				The Poisson tensor w.r.t.\ the wormhole spacetime $(\ma,g_{WH})$,  fulfilling the associativity condition \eqref{eqpoisassoc}, is given by two-sets of solutions. The first, is given by 
				\begin{align}
					\Theta^{02}&=   \frac{C_1}{{b_0^2 + l^2}},\\
					\Theta^{01}&=\pm\frac{\sqrt{-C_1^2 + 2 b_0^2 C_2 + 2 l^2 C_2}}{\sqrt{b_0^2 + l^2}},
				\end{align}
				with $C_1,\,C_2\in\R$ and all other components are equal to zero. The second solution is given by $\Theta^{\mu\nu}=0$ except for the component $\Theta^{12}$ that has the following explicit  form 
				\begin{align*}
					\Theta^{12}(l)=\frac{C_3}{\sqrt{b_0^2+l^2}},
				\end{align*}
				with $C_3\in\R$.
			\end{result}
			
			\begin{proof}
				{See Appendix \ref{appropwh}.}
			\end{proof}
			For the plot of function $ \Theta^{12}(l)$ (with scale by $C_1/b_0$ and $b_0=1$) see Figure \ref{fig:example}. 
			\begin{figure}
				\centering
				\includegraphics[width=0.5\textwidth]{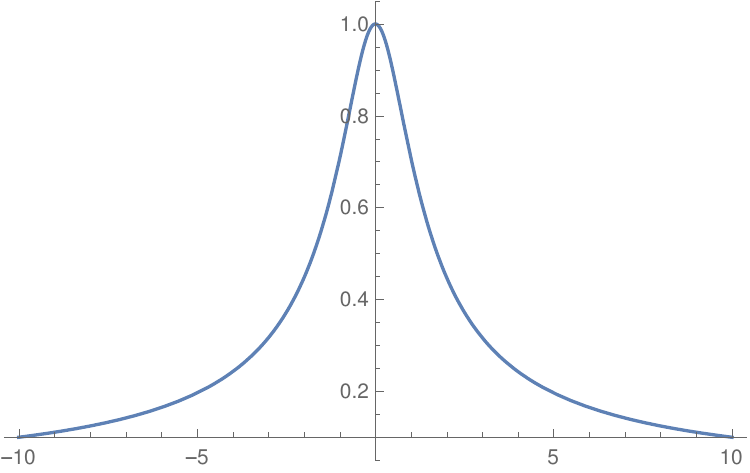}
				\caption{Plot of Function $\Theta^{12}(l)$}
				\label{fig:example}
			\end{figure}The solution states  the following: the non-commutative scale vanishes for large $l$ which is far away from the throat radius but becomes maximal at the throat of the wormhole (i.e.\ $l=0$, see Figure \ref{wormholethorne}). Assuming a quantum structure of the spacetime, this is a natural picture much in analogy with the Planck cell picture of the phase-space. In particular, the tighter the spatial part (namely the circle, where each point is a sphere) of the space-time is squeezed, the larger the non-commutative effect becomes.  The quantum push-back from the commutator relations are larger, the smaller the radius $b_0$ of the worm hole is. This   is consistent with the physical picture of such a quantum spacetime, i.e.\ the tighter the spacetime is confined the larger  non-commutative effects will become.   Since these effects act as a repulsive potential pushing the spacetime walls of the wormhole apart (the closer they are together). These quantum effects might prove fruitful when combating negative energies, of which this spacetime suffers, i.e.\ making rapid interstellar travel possible without negative energy densities. This, of course, has to be proven in the context of     deformed Einstein Equations\footnote{See the last paragraph in Section \ref{sec:conc}.}.  
			
			\subsection{Spherically Symmetric, Static Spacetimes}
			For a large class of static, spherically symmetric spacetimes $(\ma ,g )$, the general form of the metric is given by 
			\begin{align}\label{met:sphsym}\boxed{
					g =-\exp{( 2\alpha(r))}dt^2+\exp{(- 2\alpha(r))} dr^2+r^2\left(
					d\vartheta^2+\sin^2{(\vartheta)} d\varphi^2
					\right)}
			\end{align}
			with $\left\{t\in\R,r\in\R^{+},\vartheta\in(0,\pi),\varphi\in [0,2\pi)\right\}$  and $\alpha(r)$ is a function of the radial component $r$. The non-vanishing  Christoffel symbols for this class of spacetimes are given by 
			
			\begin{equation}
				\begin{matrix}
					\Gamma^1_{00}=\exp{(4\alpha(r) )}\,\partial_1\alpha(r) \,\text{,} & ~ & \Gamma^0_{10} = \partial_1\alpha(r)\,\text{,} & ~ &  \Gamma^1_{11}= -\partial_1\alpha(r)\,\text{,}\\
					~ & ~ & ~ & ~ & ~\\
					\Gamma^1_{22}=-r \exp{( 2 \alpha(r) )}\,\text{,} & ~ & \Gamma^1_{33}=-r \exp{( 2 \alpha(r) )}\sin^2{\vartheta}\,\text{,}  & ~ &  \Gamma^2_{12}=\dfrac{1}{r}\,\text{,}\\
					~ & ~ & ~ & ~ & ~\\
					\Gamma^2_{33}=-\sin{\vartheta}\cos{\vartheta}\,\text{,} & ~ & \Gamma^3_{13}=\dfrac{1}{r}\,\text{,} & ~ & \Gamma^3_{23} = \cot {\vartheta} \,\text{.}\\
					~ & ~ &
				\end{matrix}
			\end{equation} 
			
			\begin{result}\label{propdegeneratesphsym}
				Excluding the case of constant $\alpha$ for the spherically symmetric, static spacetimes the associativity condition \eqref{eqpoisassoc} has no solutions for the case of non-degenerate Poisson tensor.  
			\end{result}
			\begin{proof}
				{See Appendix \ref{appropdegeneratesphsym}.}
				
			\end{proof}
			
			\begin{result}\label{prop:sphsym}
				The Poisson tensor w.r.t.\ class of static, spherically symmetric spacetimes $(\ma,g)$ with the metric tensor of the form given in Equation \eqref{met:sphsym},  fulfilling the associativity condition \eqref{eqpoisassoc}, is given for general $\alpha$ by two-sets of solutions. The first, is given by 
				\begin{align}
					\Theta^{01}&=  C_1,
				\end{align}
				with $C_1$ a real constant  and all other components  equal to zero. The second solution is given by $\Theta^{\mu\nu}=0$ except for the component $f:=\Theta^{12}$ that is the   solution of the following differential equation,  
				\begin{align}\label{solI} 
					\partial_{1}f&=      \left( \partial_1 \alpha(r)   -\frac{1}{r} \right)\, f.
				\end{align} 
				namely
				\begin{align}\label{solII} 
					f(r)&=\frac{C_2 }{r}\exp(\alpha(r))\\
					&=\frac{C_2 }{r} \sqrt{|g_{00}|} ,
				\end{align}
				with $C_2\in\R$.
			\end{result}\begin{corollary}
				In case the function $\alpha(r)$ has the following explicit form 
				\begin{align*}
					\exp(- 2 \alpha(r) )=  {1-C_3e^{z_1}r^2},
				\end{align*}
				where $\{z_1\in\C:e^{z_1}\in\R\}$ and $C_3$ is a constant of spatial dimension $-2$, as for (Anti-)de Sitter in static coordinates,    we have the following set of solutions \begin{align*}
					\Theta^{13}(r)=\exp(  \alpha(r) ) \frac{1}{r} ,\qquad 
					\Theta^{23}(r,\vartheta)&= \exp(-   \alpha(r) ) \frac{\cot{\vartheta}}{r^2}
				\end{align*}
				with all other components vanishing.     
			\end{corollary}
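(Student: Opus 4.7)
The plan is to verify the corollary by direct substitution of the proposed $\Theta$ into the associativity condition $\Theta^{\mu\nu}\nabla_\mu\Theta^{\beta\alpha}=0$, exploiting the fact that only $\Theta^{13}$, $\Theta^{23}$ and their antisymmetric partners are nonzero. First I would extract the structural consequence of the (Anti-)de Sitter profile by differentiating $\exp(-2\alpha(r))=1-C_3 e^{z_1}r^2$, obtaining a closed-form expression for $\alpha'(r)$ that will be used whenever a derivative of the lapse appears in the subsequent calculation.

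Next I would reduce the associativity condition to a short list of independent equations. Fixing the free index $\nu$ and contracting with $\Theta^{\mu\nu}$, the $\nu\in\{1,2\}$ equations collapse to $\nabla_3\Theta^{\beta\alpha}=0$ for every pair $(\beta,\alpha)$, while the $\nu=3$ equation becomes $\Theta^{13}\nabla_1\Theta^{\beta\alpha}+\Theta^{23}\nabla_2\Theta^{\beta\alpha}=0$. Using the tabulated Christoffel symbols of the spherically symmetric static metric and the ansatz $\Theta^{13}=\exp(\alpha)/r$, $\Theta^{23}=\exp(-\alpha)\cot\vartheta/r^2$, each covariant derivative $\nabla_\mu\Theta^{\beta\alpha}$ can be expanded as a partial derivative plus two Christoffel contractions and then checked case by case for the pairs $(\beta,\alpha)\in\{(1,2),(1,3),(2,3)\}$.

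The verification then breaks into three qualitative cancellations. First, $\nabla_1\Theta^{13}=0$ because the logarithmic derivative of $\exp(\alpha)/r$ reproduces $-\Gamma^1_{11}-\Gamma^3_{13}$. Second, both $\nabla_2\Theta^{13}$ and $\nabla_3\Theta^{12}$ vanish through a balance between $\Gamma^1_{22}\Theta^{23}$ and $\Gamma^3_{23}\Theta^{13}$, which is precisely the constraint fixing the relative $\cot\vartheta$ normalization between the two $\Theta$-components. Third, the $\vartheta$-derivative of $\cot\vartheta$ combines with $\Gamma^3_{23}=\cot\vartheta$ via the identity $\csc^2\vartheta-\cot^2\vartheta=1$ to remove the angular dependence from $\nabla_2\Theta^{23}$. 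What remains after all these cancellations is a single pure-$r$ obstruction coming from $\Theta^{13}\nabla_1\Theta^{23}+\Theta^{23}\nabla_2\Theta^{23}$, proportional to
\begin{equation}
1-\exp(-2\alpha(r))-\alpha'(r)\,r,
\end{equation}
and this is precisely the algebraic relation satisfied by the (Anti-)de Sitter profile identified in the first step.

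The main obstacle is the bookkeeping: each $\nabla_\mu\Theta^{\beta\alpha}$ produces multiple Christoffel contributions, and one has to track the mixed $\cot\vartheta$, $\csc^2\vartheta$, $\cot^2\vartheta$ dependences carefully in order to confirm that the angular terms cancel in every case before the pure-radial constraint on $\alpha$ becomes visible. A secondary consistency check is that the curvature obstruction \eqref{eqnvcurv} applied to the (A)dS Riemann tensor imposes no new constraints beyond those already enforced by associativity, which is automatic once the degenerate structure of $\Theta$ is incorporated.
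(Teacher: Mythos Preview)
Your verification-by-substitution strategy is sound and yields the correct residual constraint $1-\exp(-2\alpha)-r\alpha'=0$, which indeed characterizes the (Anti-)de Sitter profile. The reduction of the associativity condition to $\nabla_3\Theta^{\beta\alpha}=0$ together with $\Theta^{13}\nabla_1\Theta^{\beta\alpha}+\Theta^{23}\nabla_2\Theta^{\beta\alpha}=0$ is exactly the right organizing principle for a degenerate $\Theta$ supported on the $(1,3)$ and $(2,3)$ components, and the trigonometric cancellation via $\csc^2\vartheta-\cot^2\vartheta=1$ is the key mechanism that isolates the pure-$r$ obstruction.

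The paper, however, takes a genuinely different route. Rather than inserting the candidate $\Theta$ and checking, it derives this branch from scratch within the case analysis of Result~\ref{prop:sphsym}: setting $f:=\Theta^{12}=0$ and keeping $g:=\Theta^{13}$, $h:=\Theta^{23}$ nonzero, one equation forces $h=b(r,\vartheta)\,g$ with $b=e^{-2\alpha}\cot\vartheta/r$; substituting this into the two remaining equations yields two expressions for $-g\partial_1 g - bg\partial_2 g$, and their difference is precisely $\alpha'-1/r+e^{-2\alpha}/r=0$, from which the special $\alpha$ and then $g$ and $h$ are solved. The advantage of the paper's approach is that it simultaneously shows this branch is exhausted (no further solutions hide in Case~2), whereas your verification establishes sufficiency but not uniqueness within that sector. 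Your approach, on the other hand, is shorter, more transparent about which Christoffel contractions drive each cancellation, and entirely adequate for proving the corollary as stated. One small slip: in your second cancellation, $\nabla_3\Theta^{12}$ actually involves $\Gamma^1_{33}$ and $\Gamma^2_{33}$ rather than $\Gamma^1_{22}$ and $\Gamma^3_{23}$; the balance still holds, but keep the bookkeeping straight when you write it out.
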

			\begin{proof}
				See Appendix \ref{appropsphsym}.
			\end{proof}
			
			In the following subsections, we investigate the explicit non-commutative structures for spacetimes that are static and spherical symmetric and where the metric takes the form given in Equation \eqref{met:sphsym}.
			
			\subsubsection{Schwarzschild}
			The Schwarzschild spacetime $(\ma_{BH}, g_{BH})$ is a solution to the Einstein field equations of general relativity that describes the spacetime geometry outside a spherically symmetric, non-rotating mass $M$. The metric has the following form, 
			
			\begin{align}\boxed{
					g_{BH}=-\left( 1-\frac{2M}{r} \right)dt^2+\left( 1-\frac{2M}{r} \right)^{-1}dr^2+r^2\left(
					d\vartheta^2+\sin^2{(\vartheta)} d\varphi^2
					\right).}
			\end{align}
			
			\begin{result}\label{prop:sphsymsch}
				The Poisson tensor w.r.t.\ Schwarzschild spacetime $(\ma_{BH}, g_{BH})$,  fulfilling the associativity condition \eqref{eqpoisassoc}, is given by two sets of solutions. The first, is given by 
				\begin{align}
					\Theta^{01}&=  C_1,
				\end{align}
				with $C_1$ being a real constant  and all other components  equal to zero. The second solution is given by $\Theta^{\mu\nu}=0$ except for the component $f:=\Theta^{12}$  that is given by
				\begin{align}\label{eq:schs}
					f(r) = \frac{C_2}{r}  \sqrt{{1 - \frac{2 M}{r}   }} ,
				\end{align}
				where $C_2\in\R$.
			\end{result}
			
			\begin{proof}
				See result \ref{prop:sphsym}.  
			\end{proof}
			If we plot the function $f(r)$ (scaled by $C_2$ and setting $M=1$), see Figure \ref{fig:example2}, we see that  at the event horizon, the non-commutative scale vanishes. The non-commutative strength then inclines and reaches a maximum at $r=3M$ and declines with $r$ inverse.   The radius  $r=3M$ is the inner most stable circular orbit (ISCO) for massive particles. This is the closest radius at which a stable circular orbit is possible around a Schwarzschild black hole. The further we move from the non-rotating mass the more the non-commutative scale becomes negligible, thus demonstrating a deep interconnection between spacetime curvature and non-commutative effects of spacetime.

			\begin{figure}
				\centering
				\includegraphics[width=0.5\textwidth]{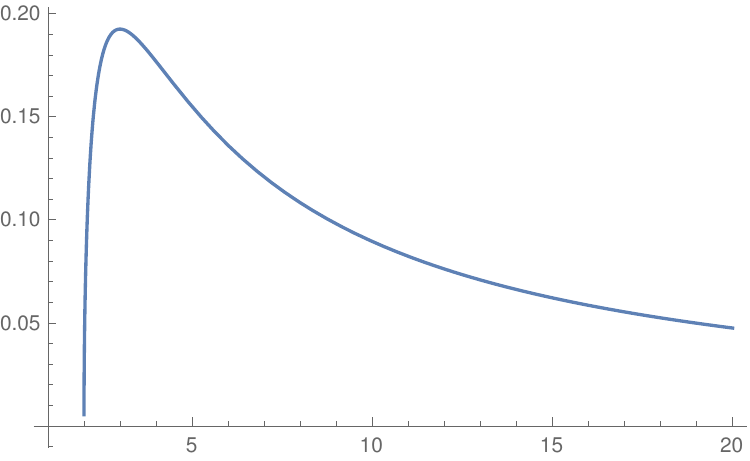}
				\caption{Plot of Function $f(r)$}
				\label{fig:example2}
			\end{figure}

			\subsubsection{Reissner–Nordstr{\o}m}
			The Reissner-Nordstr{\o}m spacetime $(\ma_{RN},g_{RN})$ is a solution to the Einstein field equations that describes the spacetime around a spherically symmetric, electrically charged with charge $Q$, non-rotating mass $M$, with space-time metric 
			\begin{align*}\boxed{
					g_{RN}=-\left( 1-\frac{2M}{r} +\frac{Q^2}{r^2}\right)dt^2+\left( 1-\frac{2M}{r} +\frac{Q^2}{r^2}\right)^{-1}dr^2+r^2\left(
					d\vartheta^2+\sin^2{(\vartheta)} d\varphi^2
					\right).}
			\end{align*}

			\begin{result}\label{prop:sphsymns}
				The Poisson tensor w.r.t.\  Reissner-Nordstr{\o}m  $(\ma_{RN}, g_{RN})$,  fulfilling the associativity condition \eqref{eqpoisassoc}, is given   by two-sets of solutions. The first, is given by 
				\begin{align}
					\Theta^{01}&=  C_1,
				\end{align}
				with $C_1$ being a real constant  and all other components  equal to zero. The second solution is given by $\Theta^{\mu\nu}=0$ except for the component $f:=\Theta^{12}$  that is given by
				\begin{align}\label{eq:solrn}
					f(r) = \frac{C_2 }{r}    {{\sqrt{{1 - \frac{2 M}{r}   +\frac{Q^2}{r^2} }} }} ,
				\end{align}
				where $C_2\in\R$.
			\end{result}

			\begin{proof}
				See result \ref{prop:sphsym}.
				
			\end{proof}
			
			\begin{figure}
				\centering
				\includegraphics[width=0.5\textwidth]{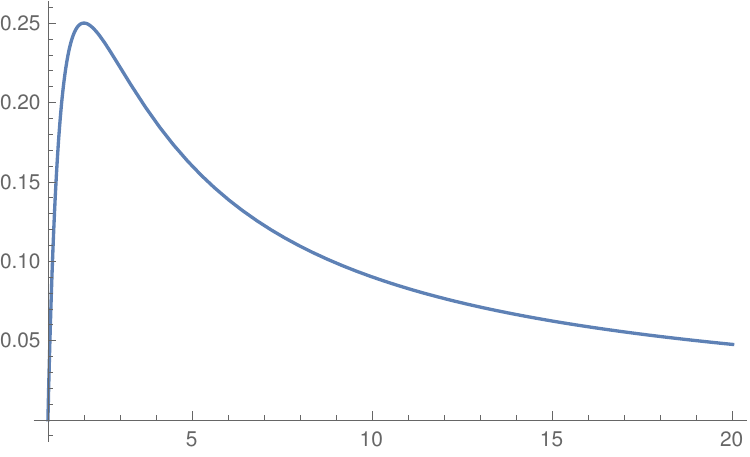}
				\caption{Plot of Function $f(r)$}
				\label{fig:example3}
			\end{figure}
			The interpretation is similar to the uncharged black hole, see Figure \ref{fig:example3}, where we set $M=Q=1$.  
			
			\subsubsection{Kottler or  Schwarzschild-(anti-)deSitter Spacetime}

			The Kottler spacetime $(\ma_{K},g_{K})$ is a solution to the Einstein equation describing the spacetime geometry outside a spherically symmetric mass distribution, with mass $M$, in the presence of a non-zero cosmological constant $\Lambda$. It is represented   by the metric  \cite[Section 2.15]{muellercat}
			\begin{align}\boxed{
					g_K = -\left(1-\frac{2M}{r}-\frac{\Lambda r^2}{3}\right) dt^2+\left(1-\frac{2M}{r}-\frac{\Lambda r^2}{3}\right)^{-1}dr^2 + r^2d\Omega^2 }.
			\end{align} 
			If $\Lambda>0$ the metric is also known as Schwarzschild-de Sitter  metric, whereas if $\Lambda<0$ it is called Schwarzschild-anti-de Sitter.
			
			\begin{result}\label{prop:sphsymnK}
				The Poisson tensor w.r.t.\  Kottler spacetime $(\ma_{K},g_{K})$,  fulfilling the associativity condition \eqref{eqpoisassoc}, is given   by two-sets of solutions. The first, is given by 
				\begin{align}
					\Theta^{01}&=  C_1,
				\end{align}
				with $C_1$ being real constant  and all other components  equal to zero. The second solution is given by $\Theta^{\mu\nu}=0$ except for the component $f:=\Theta^{12}$  that is given by
				\begin{align}\label{eq:solkst}
					f(r) = \frac{C_1}{r}  \sqrt{\left(1-\frac{2M}{r}-\frac{\Lambda r^2}{3}\right)}  
					,
				\end{align}
				where $C_1\in\R$.
			\end{result}
			
			\begin{proof}
				See result \ref{prop:sphsym}. 
			\end{proof}
			
			\subsection{Friedmann-Robertson-Walker-Lemaitre Spacetimes}
			In this section we turn our attention to a  non-static  spacetime, namely  Friedmann-Robertson-Walker-Lemaitre  (\textbf{FRWL}) spacetimes. These spacetimes are a class of cosmological solutions to Einstein's field equations that describe homogeneous and isotropic expanding or contracting universes on large scales. For the following context we consider the FRWL  spacetime  $(\ma_{FRW},g_{FRW})$ for the case of flat spatial geometry, i.e., 
			\begin{align}\boxed{
					g_{FRW}=-dt^2+a^2(t)(dx^2+dy^2+dz^2),}
			\end{align} where $\{t,x,y,z\in\R\}$ and $a(t)^2$ is the scale factor. 
			Assuming  a homogeneous and isotropic spatial part of the spacetime non-commutativity, renders for the Poisson tensor  the following conditions, 
			\begin{align*}
				\partial_i \Theta^{\alpha\beta}=0, \qquad \qquad \Theta^{ij}=0.
			\end{align*}
			This is easily seen, by taking the Lie derivatives of the tensor $\Theta$ along the Killing vector fields of FRWL. Hence, the associativity condition has to be considered for the degenerate case. 
			\begin{result}\label{propfrwl}
				The   Poisson tensor for the FRWL manifold  $(\ma_{FRW},g_{FRW})$, with vanishing spatial derivatives and spatial non-commutativity, i.e.\ $\partial_i \Theta^{\alpha\beta}=0$ and $\Theta^{ij}=0$, is given by 
				\begin{align*}
					\Theta^{0j}= \frac{\Theta}{a(t)}\,e^j,
				\end{align*}
				where $e^j$  is the unit-vector in $j$-direction and $\Theta$ is the Planck-length $\lambda_{p}^2$ squared.
			\end{result}
			
			\begin{proof}
				See Appendix \ref{appropfrwl}.

			\end{proof} 
			The resulting non-commutative structure according to result \ref{prop:nc} is therefore
			\begin{align}\label{eqcrbig}
				[t,x^i]_{\Theta}= 
				-2i\frac{\Theta}{a(t)}\,e^i.
			\end{align}
			A similar commutation relation was obtained in \cite[Theorem II.6]{MF}. Let us apply these commutation relations   heuristically  to the big bang singularity. To make the considerations more clear, we write the commutator relations as the product of uncertainties.  
			\begin{align}\label{eqcrbigunc}
				\Delta_{} T \cdot \Delta_{}  X^i \geq    \lambda_{P}^2 \, \vert \langle  {a(T)^{-1}} \rangle\vert 
			\end{align}
			for all $i$, where  $(T, X^i)$ are the operator representations (assuming they exist) of the commutator relation given in Equation \eqref{eqcrbig}. If one approaches the big bang ($t\rightarrow 0$),  then $a\rightarrow 0$ (see \cite[Page 107]{WA}), assuming the expectation value thereof behaves analogously, the  time  becomes definite. This induces the uncertainty in space to go to infinity. Let us assume that the spacetime fabric is interwoven, i.e.\ no time without space and vice versa. Then, the uncertainty relations \eqref{eqcrbig} suggest  that the occurrence of a big bang singularity can be circumvented by embracing features of a quantum  spacetime. The underlying physical concept is:  attempts to confine the time dimension will induce quantum effects counteracting by exerting pressure in the spatial dimensions, thereby averting the occurrence of a Big Bang singularity (i.e.\ averting a single point). In particular, there will be a minimum scale factor, where the product of the uncertainties in the Inequality \eqref{eqcrbigunc} is equal to the right hand side, implying a minimal size of the universe. 
			The removal of the Big Bang singularity by a non-commutative structure has as well been considered in \cite{bigbang1, bigbang2, bigbang3, bigbang4}. 
			
			\par 
			The inflationary phase, characterized by the scale factor $a(t)=e^{Ht}$, can   be viewed through this noncommutative lens, as well. The smaller $t$, the bigger   quantum effects and the quantum push-back will be in spatial directions. In particular, very early on $t\approx 10^{-43}s-10^{-36}s$, the non-commutative scale will be the largest contributing to   inflation, as a repulsive potential, e.g.\ acting as  an inflaton field. Directly  after inflation the non-commutativity strength  decreases exponentially.
			
			These statements concluded from the non-vanishing commutator relations have to be strengthened by studying deformed Einstein equations or/and QFT in these regimes. This is work in progress.   
			
			\section{Concluding Remarks and Outlook}\label{sec:conc}
			While the associativity condition has supplied us with a plethora of physically meaningful  Poisson tensors corresponding to the specific spacetimes, the condition is still unclear from a physical point of view. In the case of non-degenerate Poisson tensor we can draw, however, some parallels to the metric compatibility condition. In particular $\nabla \pi=0$, means that 
			that the non-commutative commutation relations remain consistent for observers moving along geodesics.
			\par
			In addition to the physical interpretation, another question remains open: How can we establish a connection between the Poisson tensor and the underlying geometry through deformation quantization? In our investigation, we adopted an approach centered on taking a classical spacetime metric. By imposing the associativity condition (stemming from requiring a star product), we derived the non-commutative structure that aligns with this metric. Despite this alignment, these objects (metric and Poisson tensor)  remain distinct entities, lacking a unified framework. Furthermore, our Poisson tensors, although not explicitly stated, carry a deformation parameter of first order, usually identified with the Planck length. Ideally, we want a method that generates corrections to the metric tensor in terms of this deformation parameter and enforcing the associativity condition w.r.t.\ this perturbed metric, should give us a Poisson tensor of next order in the deformation parameter. This approach is  analogous to solving the semi-classical Einstein equations, see \cite{semiclas4, semiclas3, semiclas2, semiclas1} and references therein. 
			\par
			There are (at least) two possibilities of unifying the different geometries given by the metric and Poisson tensor. 
			\par
			The first possibility involves expressing each metric in terms of vielbeins and then employing a generalized deformed product between them, utilizing the evaluated Poisson tensor. This approach, akin to methodologies found in literature such as \cite{bf0, bf1, bf2,   bf, bf4}, typically yields the classical metric alongside perturbations introduced by the deformation parameter.\par
			Another avenue for connecting the metric and the Poisson tensor arises from quantum field theory (QFT) in curved spacetimes, as demonstrated in   \cite{muchboumaye}. Here, we utilize deformation quantization on the two-point function, enabling the definition of a deformed product for pairs of points through geodesic transport. Subsequently, we evaluate the semi-classical Einstein equations with respect to the deformed states. By carefully rearranging the resulting expressions, we can draw conclusions regarding the alteration of geometric quantities, such as the Ricci tensor and scalar. Alternatively, we can interpret the new equations as quantum-corrected semi-classical equations or deformed Einstein equations. This approach offers potential insights into the relationship between the non-commutative structure and the metric and is work in progress.
			
			\section*{Acknowledgements}
			The author thanks R. Verch  for various discussions related to this topic.  We  further extend our thanks to S. Waldmann for comments and  M. Fröb, C. Minz, P. Dorau, R. Ballal, N. L\"ower for various  discussions. Furthermore, we express our gratitude to Z. Avetisyan for  the invitation to the IFS seminar, which greatly clarified and solidified numerous concepts articulated in this paper.
			\appendix
			\section{Proofs}
			\subsection{Proof of Theorem \ref{thm:defprodorder} }\label{appthmassoc}
			\begin{proof}
				
				First, we express the generalized Rieffel product in orders of the deformation matrix

				\begin{align*}
					\left( f \star_\Theta g \right)(x)  
					&=  \lim_{\epsilon \to 0} \iint \chi(\epsilon X, \epsilon Y) \, f(\exp_{(x)}( \Theta X))  \, g(\exp_{(x)}( Y)) \,  \,\mathe^{- {\mathi} \,X \cdot\, Y  }   \eqend{,}
					\\&=\lim_{\epsilon \to 0} \iint \chi(\epsilon X, \epsilon Y) \, (f(x)+( \Theta X)^{\mu}\nabla_{\mu}f+\frac{1}{2} ( \Theta X)^{\mu}( \Theta X)^{\nu}\nabla_{\mu}\nabla_{\nu}f) \, g(\exp_{(x)}( Y)) \,  \,\mathe^{- {\mathi} \,X \cdot\, Y  } 
					\\&=   \lim_{\epsilon \to 0} \iint \chi(\epsilon X, \epsilon Y) \,  f(x) \, g(\exp_{(x)}( Y)) \,  \,\mathe^{- {\mathi} \,X \cdot\, Y  }  
					\\&+  \lim_{\epsilon \to 0} \iint \chi(\epsilon X, \epsilon Y) \, ( \Theta X)^{\mu}(\nabla_{\mu}f(x)) \, g(\exp_{(x)}( Y)) \,\mathe^{- {\mathi} \,X \cdot\, Y  }  
					\\& +\frac{1}{2}  \lim_{\epsilon \to 0} \iint \chi(\epsilon X, \epsilon Y) \, (( \Theta X)^{\mu}( \Theta X)^{\nu}\nabla_{\mu}\partial_{\nu}f(x))\,g(\exp_{(x)}( Y)) \,\mathe^{- {\mathi} \,X \cdot\, Y  }  
					\\&=  f(x) g(x) -i\Theta^{\mu\nu}\partial_{\mu} f(x)\, \partial_{\nu} g(x) 
					\\& +\frac{1}{4}  \nabla_\mu \partial_\nu f \, \nabla_\rho \partial_\sigma g  \,  \lim_{\epsilon \to 0} \iint \chi(\epsilon X, \epsilon Y) \, (( \Theta X)^{\mu}( \Theta X)^{\nu} \,Y^{\rho}Y^{\sigma} \,\mathe^{- {\mathi} \,X \cdot\, Y  }  \\&=  f(x) g(x) -i\Theta^{\mu\nu}\partial_{\mu} f(x)\, \partial_{\nu} g(x) 
					\\& -\frac{1}{4}   \left(   
					\Theta^{\mu\rho}\Theta^{\nu\sigma}
					+\Theta^{\mu\sigma}\Theta^{\nu\rho}
					\right)  \nabla_\mu \partial_\nu f (x) \, \nabla_\rho \partial_\sigma g   (x) 
				\end{align*}
				where in the last lines we used the Taylor expansion of a smooth function of the exponential map, 
				\begin{align*}
					f(\exp_{(x)}( \Theta X))=f(x)+( \Theta X)^{\mu}\nabla_{\mu}f+\frac{1}{2} ( \Theta X)^{\mu}( \Theta X)^{\nu}\nabla_{\mu}\nabla_{\nu}f+\mathcal{O}(\Theta^3)
				\end{align*}
				rewrote terms as differential operators 
				\begin{align*}
					\iint \chi(\epsilon X, \epsilon Y) \, ( \Theta X)^{\mu} \,\mathe^{- {\mathi} \,X \cdot\, Y  }  &=
					\iint \chi(\epsilon X, \epsilon Y) \,  \Theta^{\mu}_{\,\,\nu}X^{\nu} \,\mathe^{- {\mathi} \,X \cdot\, Y  }   
					\\&=i
					\iint \chi(\epsilon X, \epsilon Y) \,  \Theta^{\mu}_{\,\,\nu}g^{\nu\sigma}\partial_{\sigma} \,\mathe^{- {\mathi} \,X \cdot\, Y  }  \\&=i
					\iint \chi(\epsilon X, \epsilon Y) \,  \Theta^{\mu \sigma}\partial_{\sigma} \,\mathe^{- {\mathi} \,X \cdot\, Y  }   
				\end{align*}
				and used 
				partial integration.  Next, note that we have the   relation 
				\begin{align*}
					\nabla_\rho \partial_\sigma g     = \nabla_\sigma \partial_\rho g 
				\end{align*}
				since the function $g$ is smooth and the Christoffel symbols are symmetric in the lower indices. 
				Using this relation we have 
				\begin{align*}
					\Theta^{\mu\sigma}\Theta^{\nu\rho}\nabla_\rho \partial_\sigma g &=
					\Theta^{\mu\sigma}\Theta^{\nu\rho}\nabla_\sigma \partial_\rho g\\&=
					\Theta^{\mu\rho}\Theta^{\nu\sigma}\nabla_\rho \partial_\sigma g,
				\end{align*}
				which turns the star product to 
				\begin{align*}
					\left( f \star_\Theta g \right)(x)  =  f(x) g(x) -i\Theta^{\mu\nu}\partial_{\mu} f(x)\, \partial_{\nu} g(x) 
					-\frac{1}{2}    
					\Theta^{\mu\rho}\Theta^{\nu\sigma} 
					\, \nabla_\mu \partial_\nu f (x) \, \nabla_\rho \partial_\sigma g   (x) .
				\end{align*}
				Next, we prove associativity where we first consider 
				\begin{align*}
					&   ((f\star_{\Theta}g) \star_{\Theta}h ) \,  = (F\star_{\Theta}h) \,  \\&\,\\ & =F   h -i \Theta^{\mu\nu}	 \,\partial_\mu F \,\partial_\nu h  - \frac{1}{2}  
					\Theta^{\mu\alpha}\Theta^{\nu\beta} 
					\nabla_\mu \partial_\nu F\, \nabla_\alpha \partial_\beta h \\&\,\\ & =(f    g -i \Theta^{\mu\nu}	 \,\partial_\mu f  \,\partial_\nu g  - \frac{1}{2}  
					\Theta^{\mu\alpha}\Theta^{\nu\beta} 
					\nabla_\mu \partial_\nu f \, \nabla_\alpha \partial_\beta g )    h \\&\,\\ &-i \Theta^{\mu\nu}	 \,\partial_\mu (f    g 
					-i \Theta^{\mu\nu}	 \,\partial_\mu f  \,\partial_\nu g)  \,\partial_\nu h  
					- \frac{1}{2}  
					\Theta^{\mu\alpha}\Theta^{\nu\beta} 
					\nabla_\mu \partial_\nu (fg) \, \nabla_\alpha \partial_\beta h \\&\,\\ & =(f    g -i \Theta^{\mu\nu}	 \,\partial_\mu f  \,\partial_\nu g  - \frac{1}{2}  
					\Theta^{\mu\alpha}\Theta^{\nu\beta} 
					\nabla_\mu \partial_\nu f \, \nabla_\alpha \partial_\beta g )    h \\&\,\\ &-i \Theta^{\mu\nu}	 ( \,\partial_\mu f \,   g +f     \,\partial_\mu g 
					-i \,\partial_\mu\Theta^{\alpha\beta}	 \,\partial_\alpha f  \,\partial_\beta g -i\Theta^{\alpha\beta}	 \,\,\partial_\mu\partial_\alpha f  \,\partial_\beta g -i\Theta^{\alpha\beta}	 \,\partial_\alpha f  \,\,\partial_\mu\partial_\beta g)  \,\partial_\nu  h  
					\\&\,\\ &- \frac{1}{2}  
					\Theta^{\mu\alpha}\Theta^{\nu\beta} 
					( \nabla_\mu \partial_\nu f\,g+2\partial_\mu f\partial_\nu g +f \,\nabla_\mu \partial_\nu g) \, \nabla_\alpha \partial_\beta h 
				\end{align*} 
				
				Next, we consider 
				\begin{align*}
					&   (f\star( g \star_{\Theta}h )) \,  = (f \star_{\Theta}G) \, 
					\\&\,\\ &= 
					f   G -i \Theta^{\mu\nu}	 \,\partial_\mu f  \,\partial_\nu G  - \frac{1}{2}  
					\Theta^{\mu\alpha}\Theta^{\nu\beta} 
					\nabla_\mu \partial_\nu f \, \nabla_\alpha \partial_\beta G \\&\,\\ &= 
					f   (gh -i \Theta^{\mu\nu}	 \,\partial_\mu g  \,\partial_\nu h  - \frac{1}{2}  
					\Theta^{\mu\alpha}\Theta^{\nu\beta} 
					\nabla_\mu \partial_\nu g \, \nabla_\alpha \partial_\beta h)\\&\,\\ & 
					-i \Theta^{\mu\nu}	 \,\partial_\mu f  \,\partial_\nu (gh -i \Theta^{\alpha\beta}	 \,\partial_\alpha g  \,\partial_\beta h)  
					- \frac{1}{2}  
					\Theta^{\mu\alpha}\Theta^{\nu\beta} 
					\nabla_\mu \partial_\nu f \, \nabla_\alpha \partial_\beta (gh)\\&\,\\ &= 
					f   (gh -i \Theta^{\mu\nu}	 \,\partial_\mu g  \,\partial_\nu h  - \frac{1}{2}  
					\Theta^{\mu\alpha}\Theta^{\nu\beta} 
					\nabla_\mu \partial_\nu g \, \nabla_\alpha \partial_\beta h)\\&\,\\ & 
					-i \Theta^{\mu\nu}	 \,\partial_\mu f  ( \,\partial_\nu g\,h+g \,\partial_\nu h -i  \,\partial_\nu\Theta^{\alpha\beta}	 \,\partial_\alpha g  \,\partial_\beta h-i \Theta^{\alpha\beta}	 \, \,\partial_\nu\partial_\alpha g  \,\partial_\beta h-i \Theta^{\alpha\beta}	 \,\partial_\alpha g  \, \,\partial_\nu\partial_\beta h)  
					\\&\,\\ &    - \frac{1}{2}  
					\Theta^{\mu\alpha}\Theta^{\nu\beta} 
					\nabla_\mu \partial_\nu f  (\, \nabla_\alpha \partial_\beta g\,h+2\partial_\alpha g\partial_\beta h +g\, \nabla_\alpha \partial_\beta h)
				\end{align*}
				comparing the two expressions $f\star( g \star_{\Theta}h )$ and  $(f\star  g) \star_{\Theta}h $
				we have,
				\begin{align*}
					&  -i \Theta^{\mu\nu}	 (  
					-i \,\partial_\mu\Theta^{\alpha\beta}	 \,\partial_\alpha f  \,\partial_\beta g -i\Theta^{\alpha\beta}	 \,\,\partial_\mu\partial_\alpha f  \,\partial_\beta g  )  \,\partial_\nu  h  
					-  
					\Theta^{\mu\alpha}\Theta^{\nu\beta} 
					\,   \partial_\mu f \,\partial_\nu g    \, \nabla_\alpha \partial_\beta h 
					\\&\,\\ &=  
					-i \Theta^{\mu\nu}	 \,\partial_\mu f  (   -i  \,\partial_\nu\Theta^{\alpha\beta}	 \,\partial_\alpha g  \,\partial_\beta h -i \Theta^{\alpha\beta}	 \,\partial_\alpha g  \, \,\partial_\nu\partial_\beta h)    -  
					\Theta^{\mu\alpha}\Theta^{\nu\beta} 
					\nabla_\mu \partial_\nu f   \, \partial_\alpha g \,\partial_\beta h   
				\end{align*}
				which summarizes to 
				\begin{align*}
					&  -  \Theta^{\mu\nu}	 (  
					\,\partial_\mu\Theta^{\alpha\beta}	 \,\partial_\alpha f  \,\partial_\beta g   )  \,\partial_\nu  h  
					-  
					\Theta^{\mu\nu}\Theta^{\alpha\beta}  \, \Gamma_{\nu\beta}^\gamma 
					\,   \partial_\mu f \,\partial_\alpha g   \, \partial_\gamma h 
					\\&\,\\ &=  
					-  \Theta^{\mu\nu}	 \,\partial_\mu f  (      \,\partial_\nu\Theta^{\alpha\beta}	 \,\partial_\alpha g  \,\partial_\beta h  )      -  
					\Theta^{\mu\nu}\Theta^{\alpha\beta} \,
					\Gamma_{\mu\alpha}^{\gamma}\,\partial_\gamma f   \, \partial_\nu g \,\partial_\beta h   
				\end{align*}
				using the Jacobi identity we obtain 
				\begin{align*}
					\Theta^{\mu\nu}  \,\partial_\mu\Theta^{\beta\alpha}	 +
					\Theta^{\mu\nu} \Theta^{\alpha\delta}  \, \Gamma_{\mu\delta}^\beta   +  \Theta^{\mu\nu}\Theta^{\delta\beta} \,
					\Gamma_{\mu\delta}^{\alpha}   =0
				\end{align*} 
				which can be written as the vanishing of the covariant derivative 
				\begin{align*}
					\Theta^{\mu\nu}  \,\nabla_\mu\Theta^{\beta\alpha}	    =0.
				\end{align*} 
				
			\end{proof}
			
			\subsection{Proof of Result \ref{propwh}}\label{appropwh}
			
			\begin{proof}
				
				The associativity condition for    $\alpha=0$ and $\beta=i$ reads
				\begin{align*}
					\Theta^{\mu\nu}\partial_{\nu}\Theta^{0i}= -\Theta^{\mu j}\Theta^{0k}\,\Gamma^{i}_{jk}.
				\end{align*}
				Assuming $\partial_{0,3}\Theta=0$ and using the notation  $a,b=1,2$ we have
				\begin{align*}
					\Theta^{\mu a}\partial_{a}\Theta^{0i}= -\Theta^{\mu j}\Theta^{0k}\,\Gamma^{i}_{jk}.
				\end{align*}
				We have for $i=1$
				\begin{align*}
					\Theta^{\mu a}\partial_{a}\Theta^{01}&= -\Theta^{\mu j}\Theta^{0k}\,\Gamma^{1}_{jk}
					\\&=-\Theta^{\mu 2}\Theta^{02}\,\Gamma^{1}_{22}-\Theta^{\mu 3}\Theta^{03}\,\Gamma^{1}_{33}
				\end{align*}
				while for $i=2$
				\begin{align*}
					\Theta^{\mu a}\partial_{a}\Theta^{02}&= -\Theta^{\mu j}\Theta^{0k}\,\Gamma^{2}_{jk}
					\\&=-   \Theta^{\mu 1}\Theta^{02}\,\Gamma^{2}_{12} -\Theta^{\mu 2}\Theta^{01}\,\Gamma^{2}_{21} -\Theta^{\mu 3}\Theta^{03}\,\Gamma^{2}_{33}
				\end{align*}
				while for $i=3$
				\begin{align*}
					\Theta^{\mu a}\partial_{a}\Theta^{03}&= -\Theta^{\mu j}\Theta^{0k}\,\Gamma^{3}_{jk}
					\\&= -\Theta^{\mu 1}\Theta^{03}\,\Gamma^{3}_{13}-\Theta^{\mu 3}\Theta^{01}\,\Gamma^{3}_{31}-\Theta^{\mu 2}\Theta^{03}\,\Gamma^{3}_{23}-\Theta^{\mu 3}\Theta^{02}\,\Gamma^{3}_{32}
				\end{align*}
				For $\alpha=i$ and $\beta=j$ we have 
				
				\begin{align*}
					\Theta^{\mu a}\partial_{a}\Theta^{ij}= - \Theta^{\mu k}\Theta^{sj }\,\Gamma^{i}_{ks}-\Theta^{\mu k}\Theta^{is}\,\Gamma^{j}_{ks}
				\end{align*}
				
				One set of solutions is given by setting $\Theta^{ik}=0$ and $\Theta^{03}=0$ rendering the following set of differential equations 
				\begin{align*}
					\Theta^{0 a}\partial_{a}\Theta^{01}&=-(\Theta^{02})^2 \,\Gamma^{1}_{22}
					\\  \Theta^{0 a}\partial_{a}\Theta^{02}&= - 2  \Theta^{0 1}\Theta^{02}\,\Gamma^{2}_{12} 
				\end{align*}
				Further assuming that the functions do not depend on the angle $\vartheta$ we have 
				\begin{align} 
					\Theta^{0 1}\partial_{1}\Theta^{01}&=-(\Theta^{02})^2 \,\Gamma^{1}_{22}
					\\   \partial_{1}\Theta^{02}&= - 2   \Theta^{02}\,\Gamma^{2}_{12} 
				\end{align}
				to which the solutions are 
				\begin{align}
					\Theta^{02}&=   \frac{C_1}{{b_0^2 + l^2}}\\
					\Theta^{01}&=\pm\frac{\sqrt{-C_1^2 + 2 a^2 C_2 + 2 l^2 C_2}}{\sqrt{a^2 + l^2}}
				\end{align}
				Setting the dimension-full constant $C_1=0$ we have a constant non-commutativity between the temporal and radial component. 
				\par
				The next set of solutions we consider is $\Theta^{0k}=0$ and $\Theta^{13}=\Theta^{23}=0$, rendering 
				\begin{align*}
					\Theta^{\mu a}\partial_{a}\Theta^{12}& =  -\Theta^{\mu 1}\Theta^{12}\,\Gamma^{2}_{12} 
				\end{align*} 
				which is a set of two differential equations
				\begin{align*}
					\partial_{1}\Theta^{12}& =  - \Theta^{12}\,\Gamma^{2}_{12} \\
					\partial_{2}\Theta^{12}& = 0,
				\end{align*}
				with solution 
				\begin{align*}
					\Theta^{12}(l)=\frac{C_1}{\sqrt{b_0^2+l^2}}
				\end{align*}
			\end{proof}
			\subsection{Proof of Result \ref{propdegeneratesphsym}}\label{appropdegeneratesphsym}
			\begin{proof}
				
				The components of the Riemann   and Ricci tensor are given by  
				\begin{align*}
					& R_{ \,\,101}^0= \left( -\partial_1^2 \alpha-2\left(\partial_1 \alpha\right)^2\right) , \quad  R_{\, \,202}^0=-r e^{ 2 \alpha} \partial_1 \alpha, \quad
					R_{\, \,303}^0=-r e^{ 2 \alpha} \partial_1 \alpha \,\sin ^2 \vartheta  \\ \nonumber\\
					& R_{\, \,212}^1=-r e^{ 2 \alpha} \partial_1 \alpha,  \quad R_{\, \,313}^1=-r e^{ 2 \alpha} \partial_1 \alpha \sin ^2 \vartheta ,  \quad  R_{\,\, 323}^2=\left(1-e^{ 2 \alpha}\right) \sin ^2 \vartheta \\ \nonumber\\  
					&R_{00}   = e^{4\alpha}\left(\partial_1^2 \alpha+2\left(\partial_1 \alpha\right)^2 +\frac{2}{r} \partial_1 \alpha\right), \quad
					R_{11}   =-\left(\partial_1^2 \alpha+2\left(\partial_1 \alpha\right)^2 +\frac{2}{r} \partial_1 \alpha\right) , \\ \nonumber\\ 
					&R_{22}   =-e^{ 2 \alpha}\left(r\left( 2\partial_1 \alpha\right)+1\right)+1 , \quad
					R_{33}   =R_{22} \sin ^2 \vartheta
				\end{align*}       
				Using Equation \eqref{eqnvcurv} we have for $\alpha=\gamma$
				\begin{align*}
					R_{ \lambda  \delta}\Theta^{\lambda\beta} +R^{\beta}_{\,\,\, \lambda \alpha\delta}\Theta^{\alpha\lambda} =0  .
				\end{align*}
				Using the  explicit expressions for the components we have
				\begin{center}
					\begin{tabular}{ |c|c|c|c| } 
						\hline
						$\beta$ & $\delta$ &   Solution \\
						\hline
						{0} & 1 &   $\Theta^{01}=0$ \\ 
						0 & 2 &  $\Theta^{02}=0$ or Solution \eqref{eq2}\\ 
						0 & 3 & $\Theta^{03}=0$ or Solution \eqref{eq2} \\  
						1 & 2 & $\Theta^{12}=0$ or Solution \eqref{eq2} \\  
						1 & 3 & $\Theta^{13}=0$ or Solution \eqref{eq2} \\  
						2 & 0 & $\Theta^{02}=0$ or Schwarzschild Solution $\rightarrow$  $\Theta^{02}=0$\\  
						2 & 1 & $\Theta^{12}=0$ or Schwarzschild Solution  $\rightarrow$  $\Theta^{12}=0$ \\  
						2 & 3 & $\Theta^{23}=0$  \\  
						3 & 1 & $\Theta^{13}=0$ or Schwarzschild Solution  $\rightarrow$  $\Theta^{13}=0$ \\  
						\hline
				\end{tabular}\end{center}$\,$\newline
				where we have the differential equation 
				\begin{equation}\label{eq21}
					r\partial_1\alpha+1-e^{-2\alpha}=0
				\end{equation}
				with solution 
				\begin{equation}\label{eq2}
					\alpha(r)=\frac{1}{2}\log( 1-C_1/r^2)
				\end{equation}
				The only non-vanishing component is  $\Theta^{03}$ for the Solution \eqref{eq2}, rendering a degenerate $\Theta$.
				
			\end{proof}

			\subsection{Proof of Result \ref{prop:sphsym}}\label{appropsphsym}
			\begin{proof}

				Assuming $\partial_{0,3}\Theta=0$ and using the indices $a,b=1,2$ we have for the associativity condition \eqref{eq:equiv}, 
				
				\begin{align*}
					\Theta^{\mu a}\partial_{a}\Theta^{0 k} &=- \Theta^{\mu0}\Theta^{1 
						k}\,\Gamma^{0}_{01}- \Theta^{\mu1}\Theta^{0
						k}\,\Gamma^{0}_{10}-\Theta^{\mu j}\Theta^{0i}\,\Gamma^{ k}_{j i}
				\end{align*}
				This gives us 
				
				\begin{align*}
					\Theta^{\mu a}\partial_{a}\Theta^{01} &=  - \Theta^{\mu1}\Theta^{0
						1}\,\Gamma^{0}_{10}-\Theta^{\mu j}\Theta^{0i}\,\Gamma^{ 1}_{j i}
					\\&=- \Theta^{\mu1}\Theta^{0
						1}\,\Gamma^{0}_{10}-\Theta^{\mu 1}\Theta^{01}\,\Gamma^{ 1}_{11}-\Theta^{\mu 2}\Theta^{02}\,\Gamma^{ 1}_{22}-\Theta^{\mu 3}\Theta^{03}\,\Gamma^{ 1}_{33} \\&= -\Theta^{\mu 2}\Theta^{02}\,\Gamma^{ 1}_{22}-\Theta^{\mu 3}\Theta^{03}\,\Gamma^{ 1}_{33}
				\end{align*}
				since $\Gamma^{0}_{10}+\Gamma^{ 1}_{11}=0$,
				\begin{align*}
					\Theta^{\mu a}\partial_{a}\Theta^{0 2} &=- \Theta^{\mu0}\Theta^{1 
						2}\,\Gamma^{0}_{01}- \Theta^{\mu1}\Theta^{0
						2}\,\Gamma^{0}_{10}-\Theta^{\mu j}\Theta^{0i}\,\Gamma^{ 2}_{j i}
					\\ &=- \Theta^{\mu0}\Theta^{1 
						2}\,\Gamma^{0}_{01}- \Theta^{\mu1}\Theta^{0
						2}\,\Gamma^{0}_{10}-\Theta^{\mu3}\Theta^{03}\,\Gamma^{ 2}_{33}-\Theta^{\mu 1}\Theta^{02}\,\Gamma^{ 2}_{12}-\Theta^{\mu 2}\Theta^{01}\,\Gamma^{ 2}_{21}
				\end{align*}
				\begin{align*}
					\Theta^{\mu a}\partial_{a}\Theta^{0 3} &=- \Theta^{\mu0}\Theta^{1 
						3}\,\Gamma^{0}_{01}- \Theta^{\mu1}\Theta^{0
						3}\,\Gamma^{0}_{10}-\Theta^{\mu j}\Theta^{0i}\,\Gamma^{ 3}_{j i}\\
					&=- \Theta^{\mu0}\Theta^{1 
						3}\,\Gamma^{0}_{01}- \Theta^{\mu1}\Theta^{0
						3}\,\Gamma^{0}_{10}-\Theta^{\mu 1}\Theta^{03}\,\Gamma^{ 3}_{13}-\Theta^{\mu 3}\Theta^{01}\,\Gamma^{ 3}_{31}-\Theta^{\mu 2}\Theta^{03}\,\Gamma^{ 3}_{23}-\Theta^{\mu3}\Theta^{02}\,\Gamma^{ 3}_{32}
				\end{align*}
				Second, we consider the terms where $\alpha=i,\,\beta=k$
				
				\begin{align*}
					\Theta^{\mu a}\partial_{a}\Theta^{i k}&= - \Theta^{\mu\nu}\Theta^{\lambda 
						k}\,\Gamma^{i}_{\nu\lambda}-\Theta^{\mu\nu}\Theta^{i\lambda}\,\Gamma^{ k}_{\nu\lambda}
				\end{align*}
				rendering 
				
				\begin{align*}
					\Theta^{\mu a}\partial_{a}\Theta^{1 2}&= - \Theta^{\mu\nu}\Theta^{\lambda 
						2}\,\Gamma^{1}_{\nu\lambda}-\Theta^{\mu\nu}\Theta^{1\lambda}\,\Gamma^{ 2}_{\nu\lambda}
					\\&=- \Theta^{\mu0}\Theta^{0 2}\,\Gamma^{1}_{00}- \Theta^{\mu1}\Theta^{1 2}\,\Gamma^{1}_{11} - \Theta^{\mu3}\Theta^{32}\,\Gamma^{1}_{33}
					-\Theta^{\mu3}\Theta^{1 3}\,\Gamma^{ 2}_{33}-\Theta^{\mu1}\Theta^{1 2}\,\Gamma^{ 2}_{12} 
				\end{align*}

				\begin{align*}
					\Theta^{\mu a}\partial_{a}\Theta^{1 3}&= - \Theta^{\mu\nu}\Theta^{\lambda 
						3}\,\Gamma^{1}_{\nu\lambda}-\Theta^{\mu\nu}\Theta^{1\lambda}\,\Gamma^{ 3}_{\nu\lambda}
					\\&= - \Theta^{\mu0}\Theta^{0
						3}\,\Gamma^{1}_{00}- \Theta^{\mu1}\Theta^{1
						3}\,\Gamma^{1}_{11}- \Theta^{\mu2}\Theta^{2
						3}\,\Gamma^{1}_{22}-\Theta^{\mu1}\Theta^{1 3}\,\Gamma^{ 3}_{13}-\Theta^{\mu2}\Theta^{1 3}\,\Gamma^{ 3}_{23}-\Theta^{\mu3}\Theta^{1 2}\,\Gamma^{ 3}_{32}
				\end{align*} 
				
				\begin{align*}
					\Theta^{\mu a}\partial_{a}\Theta^{2 3}&= - \Theta^{\mu\nu}\Theta^{\lambda 
						3}\,\Gamma^{2}_{\nu\lambda}-\Theta^{\mu\nu}\Theta^{2\lambda}\,\Gamma^{ 3}_{\nu\lambda}\\&
					= - \Theta^{\mu1}\Theta^{2
						3}\,\Gamma^{2}_{12}- \Theta^{\mu2}\Theta^{1
						3}\,\Gamma^{2}_{21}-\Theta^{\mu1}\Theta^{2 3}\,\Gamma^{ 3}_{13}-\Theta^{\mu3}\Theta^{21}\,\Gamma^{ 3}_{31} -\Theta^{\mu2}\Theta^{2 3}\,\Gamma^{ 3}_{23}
				\end{align*}

				One solution is given by setting all components equal to zero except $\Theta^{01}$ rendering the only remaining differential equation \begin{align*}
					\partial_1\Theta^{01}=0,
				\end{align*}with solution $\Theta^{01}=C_{01}$, with $C_{01}\in\R$.
				A different set of solutions is given if we set $\Theta^{0i}$ equal to zero. Then, the differential equations reduce to

				\begin{align} \label{eqdiffthet1}
					f\partial_{2}f&=     gh\,\Gamma^{1}_{33}
					-g^2\,\Gamma^{ 2}_{33} 
				\end{align}
				
				\begin{align} \label{eqdiffthet2}
					-   f\partial_{1}f&=    f^2\,\Gamma^{1}_{11} + h^2\,\Gamma^{1}_{33}
					-hg\,\Gamma^{ 2}_{33}+f^2\,\Gamma^{ 2}_{12} 
				\end{align}
				
				\begin{align} \label{eqdiffthet3}
					-  g\partial_{1}f-       h\partial_{2}f&=     gf\,\Gamma^{1}_{11}  
					+gf\,\Gamma^{ 2}_{12} 
				\end{align}

				\begin{align} \label{eqdiffthet4}
					f   \partial_{2}g& =  - f h\,\Gamma^{1}_{22}-2fg\,\Gamma^{ 3}_{23} 
				\end{align} 
				
				\begin{align} \label{eqdiffthet5}
					-f\partial_{1}g& =  f g\,\Gamma^{1}_{11} +fg\,\Gamma^{ 3}_{13} -hf\,\Gamma^{ 3}_{32}
				\end{align} 
				
				\begin{align} \label{eqdiffthet6}
					- g\partial_{1}g- h\partial_{2}g& =  g^2\,\Gamma^{1}_{11}+ h^2\,\Gamma^{1}_{22}+g^2\,\Gamma^{ 3}_{13}+hg\,\Gamma^{ 3}_{23} 
				\end{align} 
				
				\begin{align} \label{eqdiffthet7}
					f\partial_{2}h&=  -fh\,\Gamma^{ 3}_{23}
				\end{align}
				
				\begin{align} \label{eqdiffthet8}
					- f\partial_{1}h&= 3 fh\,\Gamma^{2}_{12}
				\end{align}
				
				\begin{align} \label{eqdiffthet9}
					- g\partial_{1}h-  h\partial_{2}h&=   3gh\,\Gamma^{2}_{12}  +h^2\,\Gamma^{ 3}_{23}
				\end{align}

				where we defined  the functions $f:=\Theta^{1 2}$, $g:=\Theta^{1 3}$, $h:=\Theta^{2 3}$. First, let us consider the solutions of the choices given in the following table,

				\begin{center}
					\begin{tabular}{ |c|c|c|c| } 
						\hline
						$f$ & $g$ & $h$& Solution \\
						\hline
						{0} & 0 & $h$ &  Eq.  \eqref{eqdiffthet2} $\Rightarrow$ $h=0$ \\ 
						{$f$} & 0 & 0 & $I$\\ 
						{0} & $g$ & $0$ &  Eq.  \eqref{eqdiffthet1}  $\Rightarrow$ $g=0$ \\ 
						$f$& $0$ & $h$&  Eq.  \eqref{eqdiffthet6}  $\Rightarrow$ $h=0$ $\Rightarrow$ $I$ \\
						\hline
					\end{tabular}
				\end{center}
				where   $I$ is  
				the solution of the following differential equations,  
				\begin{align}\label{solI}
					\partial_{2}f&=      0\\\nonumber\\
					\partial_{1}f&=     -  \left(\Gamma^{1}_{11}    +\Gamma^{ 2}_{12} \right)\, f.
				\end{align}
				We are left with three cases \, \newline
				\begin{enumerate} 
					\item \label{case1} $f,g\neq 0$ while $h=0$,\, \newline
					\item \label{case2} $g,h\neq 0$ while $f=0$, \, \newline
					\item \label{case3} $f,g,h\neq 0$.\, \newline
				\end{enumerate} 
				
				For Case  \ref{case1}  we have the following non-vanishing differential equations  
				\begin{align} \label{eqdiffthet21}
					f\partial_{2}f&=      
					-g^2\,\Gamma^{ 2}_{33} 
				\end{align}
				
				\begin{align} \label{eqdiffthet22}
					-    \partial_{1}f&=    f \,\Gamma^{1}_{11}  +f \,\Gamma^{ 2}_{12} 
				\end{align}

				\begin{align} \label{eqdiffthet23}
					\partial_{2}g& =   -2  g\,\Gamma^{ 3}_{23} 
				\end{align} 
				
				\begin{align} \label{eqdiffthet24}
					- \partial_{1}g& =   g\,\Gamma^{1}_{11} + g\,\Gamma^{ 2}_{12}  
				\end{align} 
				These differential equations are solved by a separation ansatz, that is, 
				\begin{align*}
					g(r,\vartheta)=f_1(r)g_2(\vartheta)
				\end{align*}
				where the radial part of both functions is equal, since they satisfy the same differential equation. The solution of Equation \eqref{eqdiffthet23} is given by  
				\begin{align*}
					g(r,\vartheta)= C^2_2 \,f^2_1(r) \,\csc(\vartheta)^2
				\end{align*}
				Inserting this solution  in to  Equation \eqref{eqdiffthet21} after simply rewriting it as 
				\begin{align*}
					\frac{1}{2}\partial_{2}(f^2)&=      
					-g^2\,\Gamma^{ 2}_{33} 
				\end{align*}
				gives us for the function $f$ the following solution 
				\begin{align*}
					f(r,\vartheta)= \sqrt{C_1(r) - C_2^2 f^2_1 (r) \csc(\vartheta)^2}
				\end{align*} 
				Inserting this solution into Equation \eqref{eqdiffthet22} gives us for the function $C_1(r)=C^2_3 f^2_1(r)$  rendering 
				\begin{equation}
					f(r,\vartheta)= C_4 f_1(r)\sqrt{1 - C_5^2 \csc(\vartheta)^2},
				\end{equation}
				where $C_5=C_2^2/C^2_3$, with $C_2,\cdots,C_5\in\R$. To ensure that the solution $f$ is real valued, the constant $C_2=0$ renders $g=0$ and we return to the Solution $I$. 
				
				Taking into account the case \ref{case2}, that is, $f=0$ and $g,h\neq0$ we have the following equations.

				\begin{align}\label{eqdiffthet31}
					0=      h\,\Gamma^{1}_{33}
					-g \,\Gamma^{ 2}_{33} 
				\end{align}

				\begin{align}\label{eqdiffthet32}
					- g\partial_{1}g- h\partial_{2}g& =  g^2\,\Gamma^{1}_{11}+ h^2\,\Gamma^{1}_{22}+g^2\,\Gamma^{ 3}_{13}+hg\,\Gamma^{ 3}_{23} 
				\end{align} 
				
				\begin{align}\label{eqdiffthet33}
					- g\partial_{1}h-  h\partial_{2}h&=   3gh\,\Gamma^{2}_{12}  +h^2\,\Gamma^{ 3}_{23}
				\end{align} 
				From Equation \eqref{eqdiffthet31} we obtain 
				\begin{align*}
					h=     
					g \,\Gamma^{ 2}_{33} / \Gamma^{1}_{33}=:b(r,\vartheta) g 
				\end{align*}
				where $ b(r,\vartheta) :=\exp(- 2 \alpha(r) ) \frac{\cot{\vartheta}}{r} $.
				Plugging this into the other two differential equations gives us 
				
				\begin{align}\label{eqdiffthet34}
					- g\partial_{1}g- bg\partial_{2}g& =  g^2\,\Gamma^{1}_{11}+  g^2\,\Gamma^{ 3}_{13} 
					\\&=\left( \dfrac{1}{r}-\partial_1\alpha(r)\right)   g^2  \end{align} 
				where we used the fact that $b^2\,\Gamma^{1}_{22}+b\,\Gamma^{ 3}_{23} =0$.

				\begin{align}\label{eqdiffthet35}
					-  g\partial_{1}g  
					-  b g\partial_{2}g &=   3 g^2\,\Gamma^{2}_{12}  - g^2\exp(- 2 \alpha(r) ) \frac{1}{r}+  g^2\left(-2\partial_1\alpha(r) -\frac{1}{r}\right)     
					\\&=\left(\frac{2}{r}  -\exp(- 2 \alpha(r)) \frac{1}{r}-2\partial_1\alpha(r))\right)      g^2
				\end{align} where in the last lines we used 
				the relation $b \,\Gamma^{ 3}_{23}+    \partial_{2}b=  -\exp(- 2 \alpha(r) ) \frac{1}{r}$   which follows from 
				the derivatives of the function $b$ that  are given by 
				\begin{align*}
					\partial_1 b(r,\vartheta)&= -2\partial_1\alpha(r)  \exp(- 2 \alpha(r) ) \frac{\cot{\vartheta}}{r} 
					-\exp(- 2 \alpha(r) ) \frac{\cot{\vartheta}}{r^2} \\&=
					\left(-2\partial_1\alpha(r) -\frac{1}{r}\right)b(r,\vartheta),\\ 
					\partial_2 b(r,\vartheta)&=-\csc(\vartheta)^2\exp(- 2 \alpha(r) ) \frac{1}{r}  .
				\end{align*}
				Subtracting the   equations  \eqref{eqdiffthet34} and \eqref{eqdiffthet35} renders 
				
				\begin{align*}
					0=  \partial_1\alpha(r)-  \dfrac{1}{r} + \exp(- 2 \alpha(r) ) \frac{1}{r} 
				\end{align*}
				The solution is given by 
				\begin{align*}
					\exp(- 2 \alpha(r) )=  {1-C_1e^{z_1}r^2},
				\end{align*}
				where $z_1\in\C:e^{z_1}\in\R$ and $C_1$ is a constant of spatial dimension $-2$. Therefore, there is only a solution for a certain class of $\alpha(r)$. Since the differential equations for the function $g$ do not depend on the angel $\vartheta$ we can assume $\partial_2 g=0$ which renders the differential equation \eqref{eqdiffthet34}
				\begin{align}
					-  \partial_{1}g  & =  \left( \dfrac{1}{r}-\partial_1\alpha(r)\right)   g ,
				\end{align}
				which has for the certain class of solutions w.r.t.\ $\alpha$ the following form 
				\begin{align*}
					g(r)=\exp(  \alpha(r) ) \frac{1}{r} .
				\end{align*}
				This solution gives us thus the following for the function $h$, 
				\begin{align*}
					h(r,\vartheta)= \exp(-   \alpha(r) ) \frac{\cot{\vartheta}}{r^2} .
				\end{align*}
				
				The last set of solutions is given by Case \ref{case3}, namely assuming that no function vanishes. For this case, we have,  
				
				\begin{align} \label{eqdiffthet41}
					f\partial_{2}f&=     gh\,\Gamma^{1}_{33}
					-g^2\,\Gamma^{ 2}_{33} 
				\end{align}
				
				\begin{align} \label{eqdiffthet42}
					-   f\partial_{1}f&=    f^2\,\Gamma^{1}_{11} + h^2\,\Gamma^{1}_{33}
					-hg\,\Gamma^{ 2}_{33}+f^2\,\Gamma^{ 2}_{12} 
				\end{align}
				
				\begin{align} \label{eqdiffthet43}
					-  g\partial_{1}f-       h\partial_{2}f&=     gf\,\Gamma^{1}_{11}  
					+gf\,\Gamma^{ 2}_{12} 
				\end{align}

				\begin{align} \label{eqdiffthet44}
					\partial_{2}g& =  -  h\,\Gamma^{1}_{22}-2g\,\Gamma^{ 3}_{23} 
				\end{align} 
				
				\begin{align} \label{eqdiffthet45}
					-\partial_{1}g& =   g\,\Gamma^{1}_{11} +g\,\Gamma^{ 3}_{13} -h\,\Gamma^{ 3}_{32}
				\end{align}  
				
				\begin{align} \label{eqdiffthet47}
					\partial_{2}h&=  -h\,\Gamma^{ 3}_{23}
				\end{align}
				
				\begin{align} \label{eqdiffthet48}
					- \partial_{1}h&= 3 h\,\Gamma^{2}_{12}
				\end{align}
				The strategy here is to first solve $h$ by a separation of variables ansatz and then use the solution to solve for $g$. Finally, plugging the functions $g$ and $h$ into the differential equations for $f$  renders the last differential equations.
				Therefore, we first solve for $h$
				\begin{align*}
					h(r,\vartheta)=\frac{C_1}{r^3} \csc(\vartheta)
				\end{align*}
				Using this to solve Equation \eqref{eqdiffthet44} we obtain
				\begin{align*}
					g(r,\vartheta)= C_1 \frac{e^{2\alpha(r)}}{r^2}\cot(\vartheta) \csc(\vartheta) + C_3(r) \csc^2(\vartheta)  
				\end{align*}
				Plugging this into differential equation \eqref{eqdiffthet45} gives us a differential equation for $C_3(r)$
				\begin{align}
					\partial_1C_3(r)= -C_1\cos(\vartheta)\dfrac{e^{2\alpha(r)}}{r^2}  \partial_1\alpha+  C_1\cos(\vartheta)\dfrac{e^{2\alpha(r)}}{r^3}+C_3(r)\partial_1\alpha-\frac{C_3(r)}{r}+C_1 \dfrac{\cos(\vartheta)}{r^3}
				\end{align}
				Since the function $C_3(r)$ has to be independent of the angle $\vartheta$ we take the derivative w.r.t.\ this angle which results  in 
				\begin{align*}
					- r {e^{2\alpha(r)}}   \partial_1\alpha(r)+   {e^{2\alpha(r)}} +1  =0
				\end{align*}
				This therefore results as with Case \ref{case2} into a differential equation of $\alpha(r)$. The solution in case $C_1\neq0$ is given by 
				\begin{align*}
					\alpha(r) = 1/2 \log(C_4r^2 - 1).
				\end{align*}
				However, since $r$ is unbounded  the solution  $C_1=0$ reduces Case \ref{case3} to Case \ref{case1}, i.e.\ Solution $I$, see  Equation \eqref{solI}.

			\end{proof}
			\subsection{Proof of Result \ref{propfrwl}}\label{appropfrwl}
			\begin{proof}

				The only non-vanishing Christoffel symbols for the FRWL spacetime in case of flat spatial geometry are given by 
				\begin{align}
					\Gamma^0_{ij}=\delta_{ij} \,a\,\dot{a}, \qquad \qquad \Gamma^i_{0j}=\delta^{i}_{\,\,j} {\dot{a}}/{a}
				\end{align}
				We have the following equation that follows from Condition \eqref{eq:equiv}
				\begin{align*}
					\Theta^{\mu\nu}\partial_{\nu}\Theta^{\alpha\beta}= - \Theta^{\mu\nu}\Theta^{\lambda\beta}\,\Gamma^{\alpha}_{\nu\lambda}-\Theta^{\mu\nu}\Theta^{\alpha\lambda}\,\Gamma^{\beta}_{\nu\lambda}
				\end{align*}
				Assuming that the Poisson tensor does not depend on the spatial coordinates reduces the differential equations to 
				\begin{align*}
					\Theta^{\mu0}\partial_{0}\Theta^{\alpha\beta}= - \Theta^{\mu\nu}\Theta^{\lambda\beta}\,\Gamma^{\alpha}_{\nu\lambda}-\Theta^{\mu\nu}\Theta^{\alpha\lambda}\,\Gamma^{\beta}_{\nu\lambda}
				\end{align*}
				For $\mu=0$ we have 
				\begin{align*}
					0 = - \Theta^{0i}\Theta^{\lambda\beta}\,\Gamma^{\alpha}_{i\lambda}-\Theta^{0i}\Theta^{\alpha\lambda}\,\Gamma^{\beta}_{i\lambda}
				\end{align*}
				choosing $\alpha=0$, $\beta=j$ renders 
				
				\begin{align*}
					0 &= - \Theta^{0i}\Theta^{\lambda j}\,\Gamma^{0}_{i\lambda}-\Theta^{0i} \Theta^{0\lambda}\,\Gamma^{j}_{i\lambda}
					\\&=
					- \Theta^{0i}\Theta^{k j}\,\Gamma^{0}_{ik}
				\end{align*}
				Next, we choose $\alpha=k$, $\beta=j$ renders 
				\begin{align*}
					0 &= - \Theta^{0i}\Theta^{\lambda j}\,\Gamma^{k}_{i\lambda}-\Theta^{0i}\Theta^{k\lambda}\,\Gamma^{j}_{i\lambda}
					\\&= - \Theta^{0i}\Theta^{0 j}\,\Gamma^{k}_{i0}-\Theta^{0i}\Theta^{k0}\,\Gamma^{j}_{i0}
					\\&=   - \Theta^{0k}\Theta^{0 j}  \,{\dot{a}}/{a}-\Theta^{0j}\Theta^{k0} \,{\dot{a}}/{a}
				\end{align*}
				which is automatically fulfilled since $\Theta^{0j}=-\Theta^{j0}$.
				For $\mu=l$ we have 
				
				\begin{align*}
					\Theta^{l 0}\partial_{0}\Theta^{\alpha\beta}= - \Theta^{l\nu}\Theta^{\lambda\beta}\,\Gamma^{\alpha}_{\nu\lambda}-\Theta^{l\nu}\Theta^{\alpha\lambda}\,\Gamma^{\beta}_{\nu\lambda}
				\end{align*}
				choosing $\alpha=0$, $\beta=j$ reduces to 
				\begin{align*}
					\Theta^{l 0}\partial_{0}\Theta^{0j}&= - \Theta^{l\nu}\Theta^{\lambda j}\,\Gamma^{0}_{\nu\lambda}-\Theta^{l\nu}\Theta^{0\lambda}\,\Gamma^{j}_{\nu\lambda}
					\\&= - \Theta^{li}\Theta^{k j}\,\Gamma^{0}_{ik}-\Theta^{l0}\Theta^{0k}\,\Gamma^{j}_{0k}
					\\&= - \Theta^{li}\Theta^{k j}\,\Gamma^{0}_{ik}-\Theta^{l0}\Theta^{0j  }\, {\dot{a}}/{a}
				\end{align*}
				and as last equations for $\alpha=k$, $\beta=j$ we have
				
				\begin{align*}
					\Theta^{l 0}\partial_{0}\Theta^{kj}&= - \Theta^{l\nu}\Theta^{\lambda j}\,\Gamma^{k}_{\nu\lambda}-\Theta^{l\nu}\Theta^{k\lambda}\,\Gamma^{j}_{\nu\lambda}\\&= - \Theta^{l0}\Theta^{ij}\,\Gamma^{k}_{0i}- \Theta^{li}\Theta^{0 j}\,\Gamma^{k}_{i0}
					- \Theta^{l0}\Theta^{ik}\,\Gamma^{j}_{0i}- \Theta^{li}\Theta^{0 k}\,\Gamma^{j}_{i0}
				\end{align*}
				Assuming that $\Theta^{ik}=0$ the equations reduce to 
				\begin{align*}
					\partial_{0}\Theta^{0j}&=  - \Theta^{0j  }\, {\dot{a}}/{a}.
				\end{align*}
				
			\end{proof}

			\bibliographystyle{alpha}
			\bibliography{allliterature1b.bib}

		\end{document}